\documentclass[lettersize,journal]{IEEEtran}
\usepackage[english]{babel}
\usepackage{amsmath,amsfonts}
\usepackage{amsthm} 
\usepackage{amssymb} % update from RSMA[thien]
\usepackage{mathtools} % for underbrace
\usepackage{algorithmic} % update from RSMA[thien]
\usepackage{algorithm}
\usepackage{array}
\usepackage[caption=false,font=normalsize,labelfont=sf,textfont=sf]{subfig}
\usepackage{textcomp}
\usepackage{stfloats}
\usepackage{url}
\usepackage{verbatim}
\usepackage{graphicx}
\usepackage{cite}
\hyphenation{op-tical net-works semi-conduc-tor IEEE-Xplore}
% updated with editorial comments 8/9/2021
\usepackage{lipsum}
\usepackage{hyperref}
\usepackage{soul}
\usepackage{xcolor}
\usepackage{balance}

\usepackage{epsfig} % update from RSMA[thien]
\usepackage{multirow} % update from RSMA[thien]
\usepackage{multicol} % update from RSMA[thien]
\usepackage{makecell} % update from RSMA[thien]
\usepackage[short]{optidef} 

\setlength{\abovedisplayskip}{0pt}
\setlength{\belowdisplayskip}{0pt}
\setlength{\textfloatsep}{0pt}
\setlength{\skip\footins}{5pt}
\setlength{\parskip}{0cm}

\newtheorem{Lemma}{Lemma}

  {\proof}{\proofend}
\newtheorem{proposition}{Proposition}

\newcommand{\qa}{{\bf a}}

\newcommand{\qe}{{\bf e}}

\newcommand{\qg}{{ \textbf{g} }}
\newcommand{\qh}{{ \textbf{h} }}

\newcommand{\qu}{{\bf u}}

\newcommand{\qw}{{\bf w}}
\newcommand{\qx}{{\bf x}}
\newcommand{\qy}{{ \textbf{y} }}
\newcommand{\qz}{{ \bf z }}

\newcommand{\qB}{{\bf B}}

\newcommand{\qF}{{\bf F}}
\newcommand{\qG}{{ \textbf{G} }}
\newcommand{\qH}{{ \textbf{H} }}
\newcommand{\qI}{{ \textbf{I} }}

\newcommand{\qM}{{\bf M}}
\newcommand{\qN}{{\bf N}}

\newcommand{\qR}{{\bf R}}

\newcommand{\qU}{{\bf U}}
\newcommand{\qV}{{\bf V}}

\newcommand{\qX}{{\bf X}}

\newcommand{\qZ}{{ \textbf{Z} }}

\DeclareMathOperator*{\argmax}{arg\,max}

\newcommand{\UE}{\mathtt{I}}
\newcommand{\sn}{\mathtt{E}}

\DeclareMathOperator{\ETAI}{\boldsymbol{\eta}^{\mathtt{I}}}
\DeclareMathOperator{\ETAE}{\boldsymbol{\eta}^{\mathtt{E}}}
\DeclareMathOperator{\MM}{\mathcal{M}}

\DeclareMathOperator{\K}{\mathcal{K}}

\DeclareMathOperator{\J}{\mathcal{J}}

\DeclareMathOperator{\C}{\mathbb{C}}
\DeclareMathOperator{\CN}{\mathcal{CN}}

\newcommand{\PZF}{\mathsf{PZF}}
\newcommand{\PMRT}{\mathsf{PMRT}}

\newcommand{\wimk}{\qw_{\mathrm{I},mk}}

\newcommand{\wemj}{\qw_{\mathrm{E},mj}}

\newcommand{\wimkp}{\qw_{\mathrm{I},mk'}}

\newcommand{\wemjp}{\qw_{\mathrm{E},mj'}}
\newcommand{\Ghms}{\hat{\qG}_m^{\sn}}
\newcommand{\Ghmu}{\hat{\qG}_m^{\UE}}
\newcommand{\Snn}{\sigma_n^2}
\newcommand{\Ex}{\mathbb{E}}

\newcommand{\yej}{y_{\mathtt{E},j}}

\newcommand{\yik}{y_{\mathtt{I},k}}

\newcommand{\gmkiu}{\qg_{mk}^{\UE}}
\newcommand{\gmjue}{\qg_{mj}^{\sn}}

\newcommand{\hgmjue}{\hat{\qg}_{mj}^{\sn}}
\newcommand{\hmlue}{\qh_{mj}^{\sn}}
\newcommand{\hmlueo}{\qh_{m1}^{\sn}}
\newcommand{\hmlueJ}{\qh_{mJ}^{\sn}}
\newcommand{\Hmlue}{\qH_{m}}

\newcommand{\hmueo}{\qh_{m1}}
\newcommand{\hmueL}{\qh_{mL}}
\newcommand{\hmuell}{\qh_{ml}}

\newcommand{\trace}{\mathrm{tr}}

\newcommand{\hgmkue}{\hat{\qg}_{mk}^{\UE}}

\newcommand{\ghmons}{\hat{\qg}_{m1}^{\sn}}

\newcommand{\ghmJue}{\hat{\qg}_{mJ}^{\sn}}
\newcommand{\ghmonue}{\hat{\qg}_{m1}^{\UE}}
\newcommand{\ghmKue}{\hat{\qg}_{mK}^{\UE}}
\newcommand{\ghmkue}{\hat{\qg}_{mk}^{\UE}}
\newcommand{\gtilmkue}{\tilde{\qg}_{mk}^{\UE}}

\newcommand{\gtilmjeu}{\tilde{\qg}_{mj}^{\sn}}
\newcommand{\gamuemk}{\gamma_{mk}^{\UE}}

\newcommand{\gameumj}{\gamma_{mj}^{\sn}}

\newcommand{\betamkue}{\beta_{mk}^{\UE}}
\newcommand{\betamjeu}{\beta_{mj}^{\sn}}

\newcommand{\Rhm}{\boldsymbol{\Omega}_m}
\newcommand{\Rzj}{\boldsymbol{\Psi}_j}

\newcommand{\etamkI}{\eta_{mk}^{\mathtt{I}}}

\newcommand{\etamkpI}{\eta_{mk'}^{\mathtt{I}}}
\newcommand{\etamjE}{\eta_{mj}^{\mathtt{E}}}
\newcommand{\etamjpE}{\eta_{mj'}^{\mathtt{E}}}

\newcommand{\SINRk}{\mathrm{SINR}_k}

\newcommand{\KK}{\mathcal{K}}
\newcommand{\JJ}{\mathcal{J}}

\newcommand{\xik}{x_{\mathtt{I},k}}
\newcommand{\xikp}{x_{\mathtt{I},k'}}
\newcommand{\xej}{x_{\mathtt{E},j}}
\newcommand{\xejp}{x_{\mathtt{E},j'}}

\DeclareMathOperator{\PHI}{\boldsymbol{\Phi}}

\DeclareMathOperator{\THETA}{\boldsymbol{\Theta}}

\DeclareMathOperator{\VARPHI}{\boldsymbol{\varphi}}

\begin{document}

\title{Cell-Free Massive MIMO SWIPT with Beyond Diagonal Reconfigurable Intelligent Surfaces}

\author{Thien Duc Hua, Mohammadali Mohammadi, Hien Quoc Ngo, and  Michail Matthaiou\\
\small{
Centre for Wireless Innovation (CWI), Queen's University Belfast, U.K.\\
Email:\{dhua01, m.mohammadi, hien.ngo, m.matthaiou\}@qub.ac.uk, 
}}\normalsize
\allowdisplaybreaks

% The paper headers
\markboth{.}%
{Shell \MakeLowercase{\textit{et al.}}: A Sample Article Using IEEEtran.cls for IEEE Journals}

% \IEEEpubid{0000--0000/00\$00.00~\copyright~2021 IEEE}
% Remember, if you use this you must call \IEEEpubidadjcol in the second
% column for its text to clear the IEEEpubid mark.

\maketitle

\begin{abstract}
This paper investigates the integration of beyond-diagonal reconfigurable intelligent surfaces (BD-RISs) into cell-free massive multiple-input multiple-output (CF-mMIMO) systems, focusing on applications involving simultaneous wireless information and power transfer (SWIPT). The system supports concurrently two user groups: information users (IUs) and energy users (EUs). A BD-RIS is employed to enhance the wireless power transfer (WPT) directed towards the EUs. To comprehensively evaluate the system's performance, we present an analytical framework for the spectral efficiency (SE) of IUs and the average harvested energy (HE) of EUs in the presence of spatial correlation among the BD-RIS elements and for a non-linear energy harvesting circuit. Our findings offer important insights into the transformative potential of BD-RIS, setting the stage for the development of more efficient and effective SWIPT networks. Finally, incorporating a heuristic scattering matrix design at the BD-RIS results in a substantial improvement compared to the scenario with random scattering matrix design.
\let\thefootnote\relax\footnotetext{This work was
supported by the European Research Council
(ERC) under the European Union’s Horizon 2020 research
and innovation programme (grant agreement No. 101001331). The work of H. Q. Ngo was supported by the U.K. Research and Innovation Future Leaders Fellowships under Grant MR/X010635/1. }
\end{abstract}

% \begin{IEEEkeywords}
% todo
% \end{IEEEkeywords}

%==============================================================================
%==============================================================================
\vspace{-1.2em}
\section{Introduction}
The telecommunications sector is undergoing rapid expansion, fueled by the increasing demand for high data traffic. This surge is notably driven by the deployment of 5G and 6G networks and the widespread adoption of the Internet of Everything (IoE). As a consequence, there is an anticipated substantial rise in the energy consumption of wireless networks. Recognizing this challenge, SWIPT has emerged as a promising technology to address the growing energy demands in this dynamic landscape~\cite{Clerckx:JSAC:2019}. Traditional cellular systems exhibit inherent limitations, including elevated infrastructure costs, inter-cell interference, cell-edge effects, and significant path loss attributed to extended distances. These constraints impede the seamless application of SWIPT. To tackle these challenges, CF-mMIMO emerges as a promising technology. In this paradigm, several APs are distributed across the coverage area, effectively serving multiple users. This approach reduces the distance between users and their nearby APs, fostering  macro-diversity and mitigating path loss~\cite{cite:HienNgo:cf02:2018}. As a result, CF-mMIMO provides a compelling solution to improve the SE of wireless networks and enhance the performance of WPT in SWIPT scenarios~\cite{Demir,cite:cfmmimo_swipt_Galappaththige,cite:mfmmimo_swipt_Ali}.

%extending services into obstructed zones
In recent advancements, RISs have been seamlessly integrated into wireless networks, enhancing the overall system performance. Notably, a RIS offers the capability to shape radio waves at the electromagnetic level, eliminating the need for digital signal processing methods and power amplifiers~\cite{wu2021intelligent}. Recent research efforts have delved into the exploration of RIS-assisted CF-mMIMO networks~\cite{cite:Chien:TWC:2022,Dai:TWC:2023}. Specifically, Trinh \textit{et al.}~\cite{cite:Chien:TWC:2022} conducted an analysis of the uplink and downlink SE in RIS-assisted CF-mMIMO systems.  Additionally, Dai~\textit{et al.}~\cite{Dai:TWC:2023} investigated and optimized the achievable
rate performance of uplink RIS-aided CF-mMIMO
systems.
% Additionally, Ge \textit{et al.}\cite{cite:cfmmimo_ris_ge} proposed a superimposed uplink channel estimation scheme for RIS-assisted CF-mMIMO systems.

Although the application of RISs has undeniably enhanced the WPT efficiency in various scenarios~\cite{Zhao:TCOM:2022}, their potential within the context of CF-mMIMO SWIPT systems has remained vastly unexplored. In their recent work, Shi \textit{et al.}~\cite{Shi:MCOM:2022} investigated potential application scenarios and system architectures for RIS-aided CF-mMIMO systems in the context of WPT. More precisely, they discussed the utilization of RISs in these systems to direct energy beams towards obstructed energy zones. Recently, the BD-RIS concept, which unifies different RIS modes/architectures,  has garnered significant attention from the research community~\cite{ cite:HongyuLi:BDRISoverview01:2023, Li:JSAC:2023}. This motivates us to investigate, for the first time ever, the performance of BD-RIS-assisted CF-mMIMO SWIPT systems.

We focus on a CF-mMIMO SWIPT system that serves two distinct groups: IUs and EUs. To optimize the resource utilization, we categorize the available APS into information APs (I-APs) and energy APs (E-APs). This ensures that the APs simultaneously serve IUs and EUs throughout the entire time slot, resulting in enhanced SE and EH. We employ a BD-RIS to aid in directing energy beams toward designated energy zones, thereby efficiently supporting the EUs. This novel architecture allows the EUs to harvest energy from all APs, but it also creates more interference at the IUs due to simultaneous WPT. To tackle this, we utilize the protective partial zero-forcing (PPZF) precoder, which applies partial zero-forcing (ZF) precoding at  I-APs and protective maximum ratio transmission (PMRT) at E-APS. PMRT guarantees full protection of the IUs from the energy-transmitted interference.  We consider a more realistic channel model by taking into account the spatial correlation among the BD-RIS elements. Our major contributions are:
%\vspace{-0.3em}
\begin{itemize}
    \item We derive the second-order and fourth-order moments of the estimate of the AP-to-UE channel, which consists of direct and indirect links. The indirect link takes into account the spatial correlation among the channels of the BD-RIS elements. 
    %\item We provide an asymptotic analysis for the large-scale scenario of CF-mMIMO, where the numbers of APs and RIS elements grow to infinity, and derive asymptotic SE and HE expressions. \comm{MM: This asymptotic analysis has been removed now!!}
    \item We provide  closed-form expressions for the ergodic SE of the IUs and the HE of the EUs with non-linear energy harvester. These expressions are derived based on the channel moments and the precoding scheme. The obtained results establish a fundamental basis for designing and optimizing the performance of the considered CF-mMIMO SWIPT system.
    \item We conduct extensive simulations to compare the performance of a heuristic scattering matrix design  for BD-RIS against a random design, and no service from the BD-RIS. 
\end{itemize}

\textit{Notation:} We use bold upper/lower case letters to denote matrices/vectors. The superscripts $(\cdot)^T$ and $(\cdot)^H$ stand for the transpose and the conjugate-transpose, respectively;  $\mathbf{I}_N$ denotes the $N\times N$ identity matrix. 
A circular symmetric complex Gaussian variable with variance $\sigma^2$ is denoted by $\mathcal{CN}(0,\sigma^2)$. Finally, $\mathbb{E}\{\cdot\}$ denotes the statistical expectation.

% The rapid growth of the telecommunications sector, fueled by the escalating need for high data traffic due to the deployment of 5G and 6G networks, coupled with the widespread use of the Internet-of-Everything (IoE), predicts a significant increase in the energy usage of wireless networks. Conventional cellular systems have some inherent drawbacks, such as high infrastructure cost, inter-cell interference, cell-edge effect, and significant path loss due to long distances. To address these challenges, CF-mMIMO is a promising technology, where multiple users are served by a large number of access points (APs) that are distributed over the coverage area. This reduces the distance between the users and the nearby APs, resulting in higher macro-diversity and lower path loss~\cite{cite:HienNgo:cf01:2017}. 

% By leveraging the spatial diversity and multiplexing gains, as well as mitigating the path loss and shadowing effects, CF-mMIMO can achieve high levels of SE and HE. Therefore, it is a key technology for enabling self-sustainable and energy-efficient future SWIPT networks, especially for low-power devices in the IoE \cite{cite:Mohammadali_cfsurvey}. Although the CF-mMIMO concept can enhance the SE of IUs, the wireless power transfer (WPT) for a group of commute-distance EUs still faces some critical challenges, derived from degraded pathloss, obstacles, and inter-user interference. 
%==============================================================================
\section{System model}~\label{sec:Sysmodel}
%--------------------------------
Figure~\ref{fig:system_model} illustrates a CF-mMIMO SWIPT system aided by a BD-RIS. In this configuration, a central processing unit (CPU) connects to $M$ APs. All APs cooperate to  simultaneously serve $K$ IUs and $J$ EUs in the same frequency bands. We define the sets $\KK\triangleq \{1,\ldots,K\}$, $\JJ\triangleq \{1,\ldots,J\}$ and $\MM\triangleq \{1,\ldots,M\}$ as the collections of indices of the IUs, EUs, and APs, respectively.
The APs are equipped with $L$ antennas each, while the IUs and EUs are equipped with a single antenna, so that $ML \gg (K+J)$. To ensure SWIPT under the same frequency spectrum, we consider the selection of operation modes for APs. Therefore, the APs are classified into I-APs and E-APs. The I-APs support wireless information transfer (WIT) to the IUs, while the E-APs support WPT towards the EUs. To further improve the HE of the EUs, a BD-RIS is located near the EU zone to support the transmission. In the BD-RIS, the $n$-th element connects to all $n'$-th elements for $n' \neq n, \forall n, n' \in \mathcal{N}$ through a network of impedances, where $\mathcal{N}=\{1,\ldots,N\}$ denotes the set of BD-RIS elements. Due to element-wise correlation,  the scattering matrix $\THETA$ is a full matrix, and satisfies $\THETA = \THETA^{T}$, and $\THETA^{H} \THETA = \textbf{I}_{N}$. In addition, the condition of perfect matching and no mutual coupling can be practically achieved by individually matching each antenna to the reference impedance and maintaining an antenna spacing greater than half-wavelength~\cite{cite:HongyuLi:BDRISoverview01:2023, Li:JSAC:2023}. 

\subsection{Channel Model}
%=================================================
The considered architecture operates in time division duplex, which implies that channel reciprocity holds as in the canonical form of CF-mMIMO~\cite{cite:HienNgo:cf02:2018}.
We assume a block-fading channel model, i.e., the channel remains time-invariant and frequency-flat during each coherence interval containing $\tau_{c}$ symbols, and independently varies between coherence intervals.  Particularly, we denote $\tau$ as the number of symbols per coherence interval spent on transmission of uplink training, and $(\tau_{c} - \tau)$ is the duration of downlink transmission. 

The channel vector between the $m$-th I-AP and the $k$-th IU is  denoted by $\gmkiu \in \mathbb{C}^{L \times 1}, \forall{m} \in \MM$, and $\forall{k} \in \K$. Since the BD-RIS is deployed to improve the EH system, it is located near the EU area which is far from the IU area. As a result, the indirect links reflected from the BD-RIS to the IUs are relative weak compared to the direct links, and are neglected in $\gmkiu$. The aggregated channel between the $m$-th E-AP and the $j$-th EU,  $\gmjue \in \mathbb{C}^{L \times 1}, \forall{m} \in \MM$, and $\forall{j} \in \mathcal{J}$,  is comprising the direct link, $\hmlue$, and cascaded indirect links reflected from the BD-RIS, $\Hmlue^{H} \THETA \qz_{j}$. Mathematically speaking, 
%-----------------------
\begin{equation}~\label{eq:gmjeu}
\gmjue = \hmlue  + \Hmlue^{H} \THETA \qz_{j},
\end{equation}
%------------------------
where $\hmlue \in \C^{L \times 1 }$, $\Hmlue \in \C^{N \times L}$, and $\qz_{l} \in \C^{N \times 1}$ are the channels from the $m$-th AP to the j-th EU, from the $m$-th AP to the RIS, and from the RIS to the j-th EU, respectively. In particular, we denote the $l$-th column vector of $\Hmlue$ as an $N$-dimensional vector $\hmuell$, such that $\Hmlue = [\hmueo, \ldots, \hmueL]$ with $\hmuell \in \C^{N \times 1}$. We assume that $\gmkiu \sim \CN(\boldsymbol{0},\betamkue \qI_{L})$ and $\hmlue \sim \CN(\boldsymbol{0}, \betamjeu \qI_{L})$. Moreover, as in \cite{cite:Chien:TWC:2022}, we consider the spatial correlated Rayleigh fading model for the BD-RIS-related channels as follows:
%-----------------------------------------------------------------------------
\begin{align*}
% & \gmkiu \sim \CN(0,\betamkue \qI_{L}), \hspace{2em}\hmlue \sim \CN(0, \betamjeu \qI_{L}),  \\
& \hmuell \sim \CN (\boldsymbol{0}, \Rhm), \hspace{1em}\text{and}\hspace{1em}  \qz_{j} \sim \CN (\boldsymbol{0}, \Rzj), 
\end{align*}
%-----------------------------------------------------------------------------
where  $\Rhm = \alpha_m d_H d_V \qR$ and $\Rzj = \alpha_{j} d_H d_V \qR$, 
% %-----------------------------------------------------------------------------
% \begin{align*}
% & \Rhm = \alpha_m d_H d_V \qR, & \Rhm &\in \C^{N \times N}, \quad & \text{and} & \nonumber \\
% &\Rzj = \alpha_{j} d_H d_V \qR, & \Rzj &\in \C^{N \times N}, 
% \end{align*}
% %-----------------------------------------------------------------------------
while $\beta^{I}_{mk}$, $\betamjeu$, $\alpha_m$, and $\alpha_j$ are the large-scale fading coefficients; $d_H$ and $d_v$ are the horizontal and vertical lengths of a RIS element. Moreover, $\qR \in \C^{N \times N}$ is the normalized spatial correlation among the RIS elements. Each element of the normalized spatial correlation matrix $\qR$, can be expressed as~\cite{cite:Chien:TWC:2022}
%-----------------------------------------------------------------------------
\begin{equation}~\label{eq:correlationRelement}
[\qR]_{nn'} = \text{sinc}\bigg( \frac{2|| \qu_{n} - \qu_{n'} ||}{\lambda} \bigg), \forall{n,n'} \in \mathcal{N},
\end{equation}
%-----------------------------------------------------------------------------
where 
%-----------------------------------------------------------------------------
\begin{equation}~\label{eq:vectoru}
\qu_{n} \!=\! [0, \text{mod}(n\! -\! 1) N_H,\! \lfloor (n-1)/N_H \rfloor N_V]^T, \forall{n}\in \mathcal{N},
\end{equation}
%-----------------------------------------------------------------------------
is the location vector of the $n_I$-th element w.r.t. the origin. In \eqref{eq:vectoru}, $N_H$ and $N_V$ are the respective numbers of RIS elements per row and per column, by which $N = N_{H} N_{V}$.

Given the complex analytical nature of the aggregated channel, we are providing the following Lemma to facilitate subsequent derivations.

\begin{Lemma}~\label{Lemma:2ndand4thmoment}
The second-order and fourth-order moments of the channel norm $\gmjue$ can be written as
%-----------------------------------------
\begin{subequations}
  \begin{align}
        \Ex\{ \Vert \gmjue \Vert^2 \} &=  L\delta_{mj}, \\
        \Ex\{ \Vert \gmjue \Vert^4 \} &= L(L+1)\big( \delta_{mj}^2 + \trace((\bar{\THETA}_{mj})^2) \big), ~\label{eq:2ndand4thmoment}
\end{align}  
\end{subequations}
%-----------------------------------------
where  $\delta_{mj} \triangleq \betamjeu + \trace(\bar{\THETA}_{mj})$, and $\bar{\THETA}_{mj} \triangleq \THETA^{H} \Rhm \THETA \Rzj$. 
\end{Lemma}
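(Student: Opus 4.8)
The plan is to compute the two moments directly from the channel decomposition \eqref{eq:gmjeu}, exploiting the statistical independence of the direct link $\hmlue$, the AP-to-RIS channel $\Hmlue$, and the RIS-to-EU channel $\qz_j$. First I would write $\gmjue = \hmlue + \Hmlue^{H}\THETA\qz_j$ and expand $\Vert \gmjue \Vert^2 = \Vert \hmlue \Vert^2 + 2\,\re\{\hmlue^{H}\Hmlue^{H}\THETA\qz_j\} + \Vert \Hmlue^{H}\THETA\qz_j \Vert^2$. Taking expectations, the cross term vanishes by the zero mean and independence of $\hmlue$. The first term gives $L\betamjeu$ since $\hmlue \sim \CN(\boldsymbol{0},\betamjeu\qI_L)$. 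For the indirect term, I would condition on $\Hmlue$ and use $\Ex\{\qz_j\qz_j^{H}\} = \Rzj$ to get $\Ex\{\Vert \Hmlue^{H}\THETA\qz_j\Vert^2 \mid \Hmlue\} = \trace(\Hmlue^{H}\THETA\Rzj\THETA^{H}\Hmlue)$, then take the expectation over $\Hmlue$. Here the key structural fact is that the columns $\hmuell$ of $\Hmlue$ are i.i.d.\ $\CN(\boldsymbol{0},\Rhm)$, so $\Ex\{\Hmlue\qA\Hmlue^{H}\} = \trace(\qA)\,\Rhm$ for a deterministic $\qA$ — wait, more carefully, $\Ex\{\trace(\Hmlue^{H}\qB\Hmlue)\} = \Ex\{\sum_{l}\hmuell^{H}\qB\hmuell\} = L\,\trace(\qB\Rhm)$ with $\qB = \THETA\Rzj\THETA^{H}$. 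This yields $L\,\trace(\THETA\Rzj\THETA^{H}\Rhm) = L\,\trace(\bar{\THETA}_{mj})$ after a cyclic permutation, giving $\Ex\{\Vert \gmjue\Vert^2\} = L(\betamjeu + \trace(\bar{\THETA}_{mj})) = L\delta_{mj}$.

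For the fourth moment I would again expand $\Vert \gmjue\Vert^4 = \big(\Vert \hmlue\Vert^2 + 2\,\re\{\hmlue^{H}\Hmlue^{H}\THETA\qz_j\} + \Vert \Hmlue^{H}\THETA\qz_j\Vert^2\big)^2$ and take the expectation term by term. By independence and zero mean, all odd-order-in-$\hmlue$ cross terms drop, leaving $\Ex\{\Vert\hmlue\Vert^4\}$, $2\,\Ex\{\Vert\hmlue\Vert^2\}\Ex\{\Vert\Hmlue^{H}\THETA\qz_j\Vert^2\}$, $4\,\Ex\{|\hmlue^{H}\Hmlue^{H}\THETA\qz_j|^2\}$, and $\Ex\{\Vert\Hmlue^{H}\THETA\qz_j\Vert^4\}$. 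The term $\Ex\{\Vert\hmlue\Vert^4\} = L(L+1)(\betamjeu)^2$ is the standard chi-square-type moment for a complex Gaussian vector. The mixed term $4\,\Ex\{|\hmlue^{H}\qv|^2\}$, conditioned on $\qv = \Hmlue^{H}\THETA\qz_j$, equals $4\,\betamjeu\,\Ex\{\Vert\qv\Vert^2\} = 4\,\betamjeu\, L\,\trace(\bar{\THETA}_{mj})$. The hard part, and the step I expect to be the main obstacle, is the pure indirect fourth moment $\Ex\{\Vert \Hmlue^{H}\THETA\qz_j\Vert^4\}$: this requires a careful nested conditioning argument. Conditioning on $\Hmlue$, the vector $\Hmlue^{H}\THETA\qz_j$ is Gaussian with covariance $\qC := \Hmlue^{H}\THETA\Rzj\THETA^{H}\Hmlue$, so $\Ex\{\Vert\Hmlue^{H}\THETA\qz_j\Vert^4 \mid \Hmlue\} = (\trace\qC)^2 + \trace(\qC^2)$ by the standard quadratic-form identity $\Ex\{(\qx^{H}\qx)^2\} = (\trace\qC)^2 + \trace(\qC^2)$ for $\qx\sim\CN(\boldsymbol 0,\qC)$. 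I then need $\Ex\{(\trace\qC)^2\}$ and $\Ex\{\trace(\qC^2)\}$ over the Gaussian matrix $\Hmlue$ whose columns are i.i.d.\ $\CN(\boldsymbol 0,\Rhm)$; this is a fourth-moment computation in $\Hmlue$ and will involve terms like $\Ex\{(\hmuell^{H}\qB\hmuell)(\hmuell^{H}\qB\hmuell)\}$ and cross-column terms, producing $\trace(\qB\Rhm)^2$, $\trace((\qB\Rhm)^2)$ contributions, where $\qB = \THETA\Rzj\THETA^{H}$.

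I expect the bookkeeping to collapse neatly because of the identity $\trace(\qB\Rhm) = \trace(\THETA\Rzj\THETA^{H}\Rhm) = \trace(\bar{\THETA}_{mj})$ (using $\THETA^{H}\THETA = \qI_N$ and cyclicity to move things into the form $\THETA^{H}\Rhm\THETA\Rzj$), and similarly $\trace((\qB\Rhm)^2) = \trace((\bar{\THETA}_{mj})^2)$. Collecting all four contributions and summing, the $\betamjeu^2$, $\betamjeu\trace(\bar\THETA_{mj})$, and $\trace(\bar\THETA_{mj})^2$ terms should assemble into the perfect square $L(L+1)(\delta_{mj}^2 + \trace((\bar\THETA_{mj})^2))$ with $\delta_{mj} = \betamjeu + \trace(\bar\THETA_{mj})$; the ``extra'' $\trace((\bar\THETA_{mj})^2)$ arises precisely from the $\trace(\qC^2)$ pieces that survive the conditioning. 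The only genuine subtlety to watch is correctly tracking which cross terms in the $\Hmlue$ fourth moment contribute an $L^2$ factor versus an $L$ factor (i.e.\ whether the two quadratic forms share a column index or not), since miscounting there would spoil the $L(L+1)$ prefactor. Once that combinatorics is handled, the result follows by direct substitution and simplification using the two trace identities above.
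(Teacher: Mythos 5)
Your route is essentially the paper's own: expand $\gmjue = \hmlue + \Hmlue^{H}\THETA\qz_j$, discard the odd-in-$\hmlue$ terms by independence and zero mean, compute the pure indirect fourth moment by conditioning on $\Hmlue$ and applying the Gaussian quadratic-form identity $\Ex\{\Vert\qx\Vert^4\}=(\trace\qC)^2+\trace(\qC^2)$, and then finish with a fourth-moment computation over the i.i.d.\ columns of $\Hmlue$ (the paper packages that last step as a separate auxiliary lemma on $\Ex\{\qX^H\qM\qX\qX^H\qN\qX\}$, which is exactly the cross-column bookkeeping you describe). The second-order moment, the term $\Ex\{\Vert\hmlue\Vert^4\}=L(L+1)(\betamjeu)^2$, the product term $2\,\Ex\{\Vert\hmlue\Vert^2\}\Ex\{\Vert\Hmlue^H\THETA\qz_j\Vert^2\}$, and the evaluation of $\Ex\{\Vert\Hmlue^H\THETA\qz_j\Vert^4\}=L(L+1)\big((\trace(\bar{\THETA}_{mj}))^2+\trace((\bar{\THETA}_{mj})^2)\big)$ all match the paper. (Minor aside: $\trace(\THETA\Rzj\THETA^H\Rhm)=\trace(\bar{\THETA}_{mj})$ needs only cyclicity of the trace, not $\THETA^H\THETA=\qI_N$.)

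There is, however, one concrete error that breaks the final assembly: the coefficient of the mixed term. Writing $x=\hmlue^H\Hmlue^H\THETA\qz_j$, the square of the cross term in your expansion is $4(\re\{x\})^2 = x^2+(x^*)^2+2|x|^2$; since $\hmlue$ is circularly symmetric and independent of $\Hmlue$ and $\qz_j$, we have $\Ex\{x^2\}=\Ex\{(x^*)^2\}=0$, so this term contributes $2\,\Ex\{|x|^2\}=2L\betamjeu\trace(\bar{\THETA}_{mj})$, not $4\,\Ex\{|x|^2\}$ as you state. With your coefficient the four pieces sum to $L(L+1)\big(\delta_{mj}^2+\trace((\bar{\THETA}_{mj})^2)\big)+2L\betamjeu\trace(\bar{\THETA}_{mj})$, i.e.\ the cross contributions no longer complete the required $2L(L+1)\betamjeu\trace(\bar{\THETA}_{mj})$ and the perfect square fails. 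A scalar sanity check exposes it: for $g=h+v$ with independent $h\sim\CN(0,\beta)$ and $v\sim\CN(0,\sigma^2)$, $\Ex\{|g|^4\}=2(\beta+\sigma^2)^2$ requires a mixed contribution of $2\beta\sigma^2$, whereas your accounting gives $4\beta\sigma^2$ plus the separate product term $2\beta\sigma^2$. Replacing the $4$ by a $2$, your derivation closes exactly as in the paper, which tracks this same quantity as $\Ex\{|b|^2\}+\Ex\{|c|^2\}$ with $b=\hmlue^H\Hmlue^H\THETA\qz_j$ and $c=b^*$.
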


\begin{proof}
    See Appendix~\ref{appendix:A}.
\end{proof}

%%%%%%%%%%%%%%%%%%%%%%%%%%%%%%%%%%%%%%
\begin{figure}[t]
	\centering
	\vspace{0em}
	\includegraphics[width=78mm]{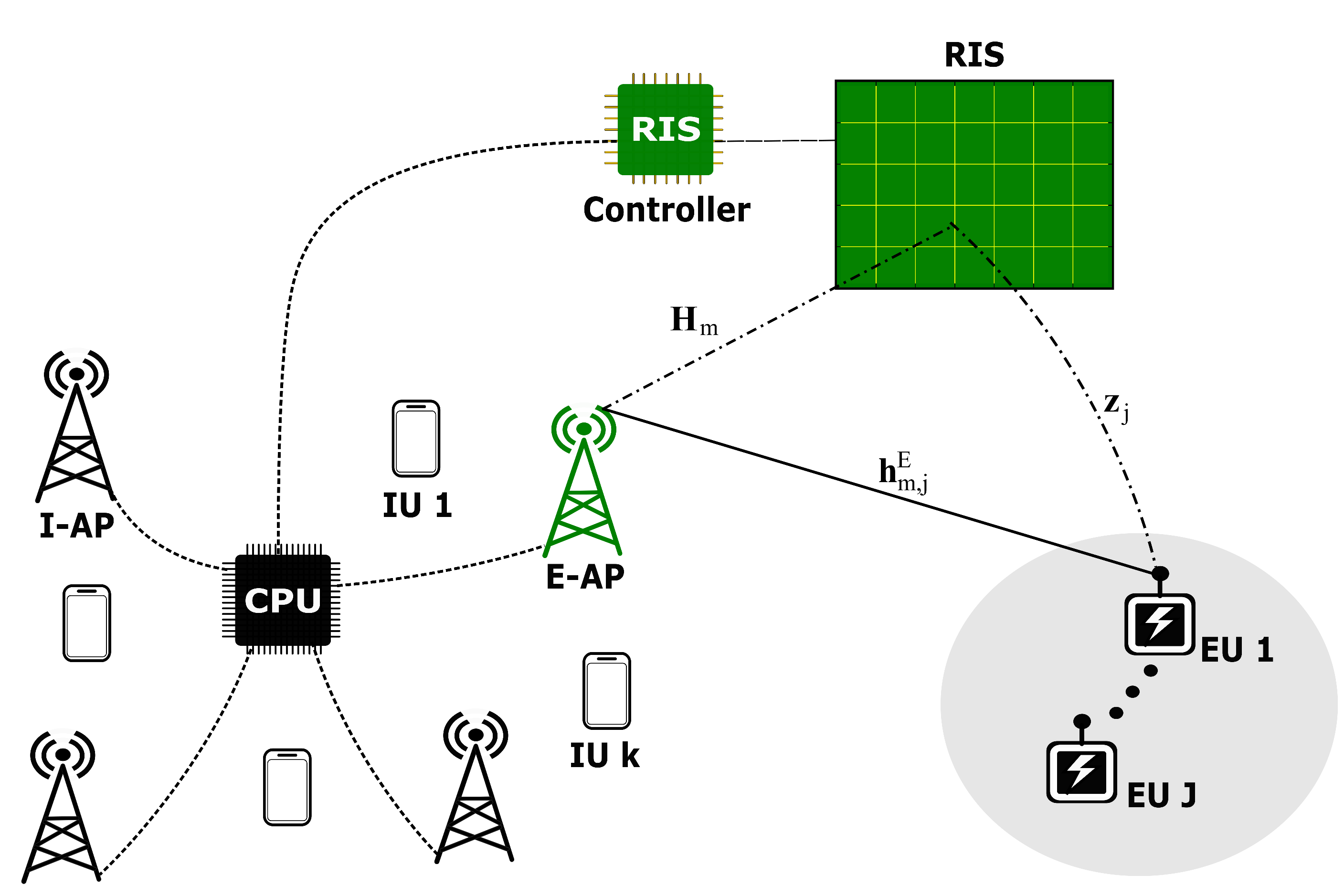}
	\vspace{0em}
	\caption{The proposed  CF-mMIMO RIS-assisted SWIPT system.}
	\vspace{0.8em}
	\label{fig:system_model}
\end{figure}
%%%%%%%%%%%%%%%%%%%%%%%%%%%%%%%%%%%

% -----------------------------------------------------------------------------
\vspace{-1.5em}
\subsection{Uplink Training for Channel Estimation}\label{phase:ULforCE}
%-----------------------------------------------------------------------------
In the training phase, the channels are estimated at the APs. To this end, all the IUs and EUs transmit orthogonal pilot sequences $\VARPHI_{k}^{I}$ and $\VARPHI_{j}^{E}$ of length $\tau$ symbols, which requires $\tau \geq \vert \mathcal{K} \vert + \vert \mathcal{J} \vert$. Let $\PHI_{K} = [\VARPHI_{1}^{I},\ldots,\VARPHI_{K}^{I}]$ and $\PHI_{J} = [\VARPHI_{1}^{E},\ldots,\VARPHI_{J}^{E}]$ denote the pilot matrices transmitted by the IUs and EUs. By applying similar analytical methodology as in \cite{cite:HienNgo:cf01:2017}, the minimum mean square error (MMSE) estimates of $\gmkiu$ and $\gmjue$ are $\hgmkue \sim \CN(\boldsymbol{0},\gamuemk \qI_{L})$ and $\hgmjue \sim \CN(\boldsymbol{0},\gameumj \qI_{L})$,  where 
%-----------------------------------------------------------------------------
\begin{align}~\label{eq:gammaI_APIU}
&\gamuemk \triangleq \Ex \Big\{ \big\Vert [\hgmkue]_{l} \big\Vert^{2} \Big\} = \frac {\tau \rho_{u}(\betamkue)^2} {\tau \rho_{u} \betamkue + 1}, \\
\vspace{-2em}
&\gameumj \triangleq \Ex \Big\{ \big\Vert [\hgmjue]_{l} \big\Vert^{2} \Big\} = \frac{\tau \rho_{u}(\delta_{mj})^2} {\tau \rho_{u} \delta_{mj} + 1},
\end{align}
%----------------------------------------------------------------------------
 where $\rho_{u}$ denotes the uplink training signal-to-noise ratio (SNR). Due to the independence property of MMSE estimation, the estimation errors are distributed as $\gtilmkue \sim \CN(\boldsymbol{0}, (\betamkue - \gamuemk) \qI_L)$ and $\gtilmjeu \sim \CN(\boldsymbol{0}, (\delta_{mj} - \gameumj) \qI_L)$.
%%%%%%%%%%%%%%%%%%%%%%%%%%%%%%%%%%%%%
\vspace{-0.5em}
\subsection{Downlink SWIPT}
%%%%%%%%%%%%%%%%%%%%%%%%%%%%%%%%%%%%
During downlink transmission phase, the $K$ IUs and $J$ EUs are ensured simultaneous downlink service. Following the channel acquisition from the training phase, $M$ APs initially select between the WIT or WPT operation mode. Subsequently, according to the mode of operation, precoding schemes are appropriately employed for WIT and WPT operations. 
Specifically, the decision on which operation mode is assigned to each AP is made based on the network design requirements. 
%optimized to maximize the average sum-energy harvesting efficiency at the EUs, while considering the spectral efficiency requirements of the IUs. 
We denote the binary variable $a_m$ as the operation mode indicator, which can be formulated as
%-----------------------------------------------------------------------------
\begin{align}~\label{eq:mode_indicator}
  a_m 
  &\triangleq 
    \begin{cases}
      1 & \text{if the $m$-th AP operates as I-AP}\\
      %1 & \text{if $$}\\
      0 & \text{if the $m$-th AP operates as E-AP}.
    \end{cases}       
\end{align}
%-----------------------------------------------------------------------------
We denote $\xik$ and $\xej$, where $\Ex\{ \vert \xik \vert^2 \} = \Ex\{ \vert \xej \vert^2 \} =1$, as the data and energy symbol transmitted to the $k$-th IU and $j$-th EU, respectively.  Then, the transmitted signal from the $m$-th AP is 
%-----------------------------------------------------------------------------
\vspace{-0.2em}
\begin{align}~\label{eq:x_m}
\qx_{m} 
%&= \sqrt{a_m} \qx_{\mathtt{I},m} +\sqrt{1-a_m}\qx_ {\mathtt{E},m} \\
&= \sqrt{a_m\rho_{d}}\sum\nolimits_{k\in\KK}\sqrt{\etamkI} \wimk \xik \nonumber \\
&+  \sqrt{(1-a_m)\rho_{d}}\sum\nolimits_{j\in\JJ} \sqrt{\etamjE}\wemj \xej,
\end{align}
%-----------------------------------------------------------------------------
where $\rho_{d} = \tilde{\rho_{d}}/\Snn$ is the maximum downlink SNR, while $\tilde{\rho_{d}}$ denotes the transmit power and $\Snn$ is the noise power; $\wimk \in \C^{L\times 1}$ and $\wemj\in \C^{L\times 1}$ are the precoding vectors for the $k$-th IU and $j$-th EU, respectively, which adhere to the constraint $\Ex\big\{\big\Vert\wimk\big\Vert^2\big\}=\Ex\big\{\big\Vert\wemj\big\Vert^2\big\}=1$. We denote the transmitted signal from the $m$-th I-AP and the $m$-th E-AP as $\qx_{\mathtt{I},m}$ and $\qx_{\mathtt{E},m}$, respectively. Furthermore, $\etamkI$ and $\etamjE$ are the  power control coefficients chosen to satisfy the power constraint at each AP, i.e.,
%--------------------------------
\begin{align}
& a_m\Ex\big\{\big\Vert \qx_{\mathtt{I},m}\big\Vert^2\big\}+ (1-a_m)\Ex\big\{\big\Vert \qx_{\mathtt{E},m}\big\Vert^2\big\}\leq \rho_{d}.
\end{align}
%--------------------------------
With the transmitted signal $\qx_m$, the $k$-th IU and the $j$-th EU then receive
%-------------------
\begin{align}~\label{eq:yik}
   &\yik 
    \!=\! \sqrt{a_m \rho_d} \sum\nolimits_{m\in\MM} \sqrt{\etamkI} (\gmkiu)^{H} \wimk \xik    \\
    &+ \sqrt{a_m \rho_d} \sum\nolimits_{k'\neq k}^{K}\sum\nolimits_{m\in\MM} \sqrt{\etamkpI} (\gmkiu)^{H} \wimkp \xikp \nonumber \\
    &+\! \sqrt{(\!1\!-\!a_m\!) \rho_d}\! \sum\nolimits_{j\in\JJ}\!\!\sum\nolimits_{m\in\MM}\!\! \sqrt{\etamjE} (\gmkiu)^{\!H}\! \wemj \xej  \!\!+\! n_{d,k}, \nonumber 
\end{align}
%-------------------
and,
%-------------------
\begin{align}~\label{eq:yel}
    &\yej 
    %&= \sum_{m\in\MM}{\sqrt{(1-a_m)}(\gmjue)^{H} \qx_{m} + n_{d,j}} \nonumber \\
    \!=\! \sqrt{(1\!-\!a_m) \rho_d} \!\sum\nolimits_{j'\in\JJ}\!\!\sum\nolimits_{m\in\MM}\!\! \sqrt{\etamjpE} (\gmjue)^{\!H} \wemjp \xejp   \nonumber \\
    &+\! \sqrt{a_m \rho_d}\! \sum\nolimits_{k\in\KK}\!\sum\nolimits_{m\in\MM}\!\! \sqrt{\etamkI} (\gmjue)^{\!H} \wimk \xik \! +\! n_{d,j}, 
\end{align}
%-------------------
respectively, where $n_{d,k}$ and $n_{d,j} \sim \CN(0,1)$ are the  corresponding additive white Gaussian noises.

%%%%%%%%%%%%%%%%%%%%%%%%
% \subsection{ Spectral Efficiency and Average Harvested Energy}
\section{Precoding Design and Performance Evaluation}
%%%%%%%%%%%%%%%%%%%%%%

\subsection{Precoding Design}

Different beamforming designs can be explored at the system design level to meet the performance requirements of the system. In order to strike a balance between complexity and performance, we consider the PZF technique. Then,  local PZF precoding is considered at the I-APs and PMRT is applied at the E-APs. This precoding design capitalizes on the effectiveness of the ZF principle in suppressing inter-user interference, making it nearly optimal for information transmission~\cite{cite:pzf_pmrt_2020}. Additionally, power transfer with MRT proves to be optimal for WPT systems, especially with a large number of antennas~\cite{cite:MRT_for_HE:Almradi}. However, it is worth noting that in our system, IUs also encounter non-coherent interference, denoted as $\mathrm{EUI}_{kj}$, from EUs due to the simultaneous transmission of information and energy signals.
To mitigate the non-coherent interference experienced from energy signals sent to EUs, MRT can be implemented in the orthogonal complement of the IUs’ channel space, which is called PMRT. Let $\Ghmu = \big[\ghmonue, \ldots, \ghmKue\big] \in \C^{L\times K}$ and $\Ghms= \big[\ghmons, \ldots, \ghmJue\big] \in \C^{L\times J}$ be  the matrices of the estimated channels between the $m$-th AP and all IUs, and all EUs, respectively. Accordingly, the PZF and PMRT precoders at the $m$-th AP towards the $k$-th IU and the $j$-th EU can be designed as
%-------------------
\begin{subequations}
 \begin{align}
    \wimk^{\PZF} &=\alpha_{mk}^{\PZF}{ \Ghmu \Big(\big(\Ghmu\big)^H \Ghmu\Big)^{-1} \qe_k^I}
    ,~\label{eq:wipzf}\\
        \wemj^{\PMRT} &= \alpha_{mj}^{\PMRT}{\qB_m\Ghms\qe_{j}^{E}},~\label{eq:wemrt}
\end{align}   
\end{subequations}
%-------------------
where 
%------------
\begin{align}
    \alpha_{mk}^{\PZF}&\triangleq\Big(\Ex \Big\{ \big\Vert \Ghmu \big(\big(\Ghmu\big)^H \Ghmu\big)^{-1} \qe_k^I  \big\Vert^2 \Big\}  \Big)^{-\frac{1}{2}}
    \nonumber\\
    {\alpha_{mj}^{\PMRT}}&\triangleq\Big( \Ex \big\{ \big\Vert \qB_m\Ghms\qe_{j}^{E} \big\Vert^2\big\} \Big)^{-\frac{1}{2}},
\end{align}
%-----------------
% $\alpha_{mk}^{\PZF}\triangleq\Big(\Ex \Big\{ \big\Vert \Ghmu \big(\big(\Ghmu\big)^H \Ghmu\big)^{-1} \qe_k^I  \big\Vert^2 \Big\}  \Big)^{\frac{1}{2}}$
% and $ {\alpha_{mj}^{\PMRT}}\triangleq\Big( \Ex \big\{ \big\Vert \qB_m\Ghms\qe_{j}^{E} \big\Vert^2\big\} 
%\Big)^{-\frac{1}{2}}$, 
with $\qe_{k}^{I}$ and $\qe_{j}^{E}$ being the $k$-th column of $\qI_{K}$ and the $j$-th column of $\qI_{J}$, respectively. In addition, $\qB_m$ denotes the projection matrix onto the orthogonal complement of $\Ghmu$ so that $\big(\ghmkue\big)^H \qB_m =\boldsymbol{0}$. Thus, $\qB_m$ can be computed as
%---------------------------
\begin{align}
  \qB_m  = \qI_{L}  - \Ghmu \Big( \big(\Ghmu\big)^H \Ghmu\Big)^{-1}  \big(\Ghmu\big)^H.
\end{align}
%---------------------------
% Due to pilot contamination, ZF precoder can only suppress the IUs using different UL training pilots. We denote $\tau_{\mathcal{K}} \leq \tau$ as the number of different pilots used by IUs. 

According to~\cite{cite:pzf_pmrt_2020}, under the consideration of independent Rayleigh fading channels, the analytical terms of the normalization factors in \eqref{eq:wipzf} and \eqref{eq:wemrt} can be calculated as $\big(\alpha_{mk}^{\PZF}\big)^2= {(L-K)\gamuemk}$ and $\big(\alpha_{mj}^{\PMRT}\big)^2= \big((L-K)\gameumj \big)^{-1}$.

%%%%%%%%%%%%%%%%%%
\vspace{-1em}
\subsection{ Spectral Efficiency and Average Harvested Energy}
%%%%%%%%%%%%%%%%%
Upon receiving the signal \eqref{eq:yik}, the $k$-th IU detects its desired symbol $\xik$. In the absence of a downlink training phase, the IUs avail of channel statistics, specifically the average effective channel gain, to detect their desired symbols. For the downlink SE analysis at the IUs, we apply the hardening bound~\cite{cite:HienNgo:cf01:2017}. Consequently, the received signal at the $j$-th IU can be expressed as
%-------------------
\begin{align}~\label{eq:yi:hardening}
    \yik &=  \mathrm{DS}_k  \xik +
    \mathrm{BU}_k \xik 
         +\sum\nolimits_{k'\in\K \setminus k}
     \mathrm{IUI}_{kk'}
     \xikp
    + \sum\nolimits_{j\in\J}
     \mathrm{EUI}_{kj}\xej + n_k,~\forall k\in\K,
\end{align}
%-------------------
where $\mathrm{DS}_k$, $\mathrm{BU}_k$, $\mathrm{IUI}_{kk'}$, and $\mathrm{EUI}_{kj}$ represent the desired signal, the beamforming gain uncertainty, the interference cause by the $k'$-th IU, and the interference caused by the $j$-th EU, respectively, given by
%--------------------------------
\begin{subequations}
  \begin{align}~\label{eq:yi:components}
\mathrm{DS}_k &\triangleq \sum\nolimits_{m\in\MM} \sqrt{a_m \rho_d \etamkI} \Ex \big\{(\gmkiu)^{H} \wimk^{\PZF} \big\},  \\
\mathrm{BU}_k &\triangleq \sum\nolimits_{m\in\MM} \sqrt{a_m \rho_d \etamkI} \Bigl( (\gmkiu)^{H} \wimk^{\PZF} - \Ex \big\{(\gmkiu)^{H} \wimk^{\PZF} \big\} \Bigl)~\label{eq:component_BUk}, \\
\mathrm{IUI}_{kk'} &\triangleq \sum\nolimits_{m\in\MM} \sqrt{a_m \rho_d\etamkpI} (\gmkiu)^{H} \wimkp^{\PZF}, \\
\mathrm{EUI}_{kj} &\triangleq \sum\nolimits_{m\in\MM} \sqrt{(1-a_m) \rho_d\etamjE} (\gmkiu)^{H} \wemj^{\PMRT}.  
\end{align}  
\end{subequations}
%--------------------------------

The corresponding DL SE in [bit/s/Hz] for the $k$-th IU can be obtained as
%-------------------
\begin{align}~\label{eq:SEk:Ex}
    \mathrm{SE}_k
      &=
      \Big(1\!- \!\frac{\tau}{\tau_c}\Big)
      \log_2
      \left(
       1\! + \SINRk
     \right),
\end{align}
%-------------------
where $\SINRk$ is the effective signal-
to-interference-and-noise (SINR), given by $\SINRk=$ 
%--------------------
\begin{align}~\label{eq:SINE:general}
% &\SINRk =\nonumber\\
       &\!\frac{
                 \big\vert  \mathrm{DS}_k  \big\vert^2
                 }
                 {  
                 \Ex\big\{ \big\vert  \mathrm{BU}_k  \big\vert^2\!\big\} \!+
                  \!\!
                 \sum_{k'\in\KK\setminus k}\!
                  \Ex\big\{ \big\vert \mathrm{IUI}_{kk'} \big\vert^2\!\big\}
                  \! + \!
                   \!
                  \sum_{j\in\JJ}\!
                 \Ex \big\{ \big\vert  \mathrm{EUI}_{kj} \big\vert^2\!\big\}
                   \!+\!  1}. 
\end{align}
%---------------------

{{To characterize the HE, a non-linear energy harvesting model with the sigmoidal function is used. Therefore, the total HE at EU $j$ is given by~\cite{Boshkovska:CLET:2015}
 %-------------------------
 \vspace{0.1em}
  \begin{align}~\label{eq:NLEH}
  \Phi_j\big(\qa,  \ETAE,\ETAI\big) = \frac{\Lambda_{j}\big(\mathrm{E}_{j}(\qa,  \ETAE, \ETAI)\big) - \phi \nu }{1-\nu}, ~\forall j\in\JJ,
 \end{align}
 %---------------------------
 where $\qa$ is an indicator vector, whose entries are $a_m$; $\ETAI = [\eta_{m1}^{\mathtt{I}}, \ldots, \eta_{mK}^{\mathtt{I}}]$; $ \ETAE = [\eta_{m1}^{\mathtt{E}}, \ldots, \eta_{mJ}^{\mathtt{E}}]$;  $\phi$ is the maximum output DC power; $\nu=\frac{1}{1 + \exp(\xi \chi)}$ is a constant to guarantee a zero input/output response, while $\xi$ and $ \chi$ are constant related parameters that depend on the circuit. Moreover, $\Lambda\big(\mathrm{E}_{j}(\qa,  \ETAE, \ETAI)\big)$  is the traditional logistic function, given by 
 %------------------------
  \begin{align}~\label{eq:PsiEl}
     \Lambda_{j}\big(\mathrm{E}_{j}(\qa,  \ETAE,  \ETAI)\big) &\!=\!\!\frac{\phi}{1 \!+ \!\exp\big(\!-\xi\big(\mathrm{E}_{j}(\qa, \ETAE,\ETAI)\!-\! \chi\big)\!\big)}.
 \end{align}
 %------------------------
Note that $\mathrm{E}_{j}(\qa, \ETAE,\ETAI)$ denotes the received RF energy at EU $j$, $\forall j\in\JJ$. We consider the average of the harvested energy as the performance metric of the WPT operation, given by
 %-------------------------
  \begin{align}~\label{eq:NLEH:av}
  \Ex\big\{\Phi_j\big(\qa,  \ETAE,\ETAI\big)\big\} = \frac{\Ex\big\{\Lambda_{j}\big(\mathrm{E}_{j}(\qa,  \ETAE, \ETAI)\big)\big\} - \phi \nu }{1-\nu}.
 \end{align}
 %---------------------------
Finding $\Ex\big\{\Lambda_{j}\big(\mathrm{E}_{j}(\qa,  \ETAE, \ETAI)\big)\big\}$ is complicated if not impossible. Nevertheless, since the logistic function in~\eqref{eq:PsiEl} is a convex function of $\mathrm{E}_{j}(\qa, \ETAI, \ETAE)$, by using Jensen's inequality, we have
%------------------------
\vspace{-0.6em}
\begin{align}\label{eq:Jensen}
 \Ex\left\{\Lambda_{j}\left(\mathrm{E}_{j}\big(\qa,  \ETAI, \ETAE\big)\right)\right\}&
 \geq \Lambda_{j}\left(\Ex\left\{\mathrm{E}_{j}\big(\qa, \ETAI, \ETAE\big)\right\}\right) 
 \nonumber\\
 &=\Lambda_{j}\left(Q_{j}\big(\qa,  \ETAI, \ETAE\big)\right),
 \end{align}
%-------------------------------
where $Q_{j}\big(\qa,  \ETAI, \ETAE\big) = \Ex\{\mathrm{E}_{j}(\qa,  \ETAI, \ETAE\big)\}$. Thus, we can find a tight lower-bound on the average HE at the EU $j\in\JJ$. By using~\eqref{eq:yel}, we have
%-------------------
\vspace{0.1em}
\begin{align}~\label{eq:El_average}
     &\Ex\big\{\mathrm{E}_{j}(\qa,  \ETAI, \ETAE\big)\big\} =(\tau_c-\tau)\Snn 
     \bigg( {\rho}_d\!\sum\nolimits_{m\in\MM}\!\!
   {(1\!-\!a_m)\etamjE} \Ex\Big\{\big\vert\big(\gmjue\big)^{\!H}\wemj^{\PMRT}\big\vert^2\Big\} \nonumber \\
     &\hspace{-1.5em}+{\rho}_d\!\!\sum_{j' \in\JJ\setminus j}\!\sum\nolimits_{m\in\MM}\!\!
   {(1\!-\!a_m)\etamjpE} \Ex\Big\{\big\vert\big(\gmjue\big)^{\!H}\wemjp^{\PMRT}\big\vert^2\Big\} 
    +\!{\rho}_d
   \sum\nolimits_{k\in\KK}\!\sum\nolimits_{m\in\MM}\!
   {a_m\etamkI}\Ex\Big\{\big\vert\big(\gmjue\big)^{\!H}\wimk^{\PZF}\big\vert^2\Big\} \!+\! 1 \bigg).
\end{align}
%-------------------

We provide closed-form expressions for the SE and average HE under the PPZF precoding scheme. 
% Due to the limitation of the manuscript, the proof of the Lemma \ref{Lemma:2ndand4thmoment} and Theorems \ref{Theorem:SE:PPZF} and \ref{Theorem:RF:PPZF} will be relegated to the journal version of this study.

\begin{proposition}~\label{Theorem:SE:PPZF}
The ergodic SE for the $k$-th IU, achieved by PZF precoding at the I-APs and PMRT at the E-APs is given by~\eqref{eq:SEk:Ex}, where the effective SINR is given in closed-form by~\eqref{eq:SINE:PPZF} at the top of next page.
%--------------------
\begin{figure*}
\begin{align}~\label{eq:SINE:PPZF}
    &\SINRk^{\PZF}\big(\qa,  \ETAI, \ETAE\big) =
    \!\frac{
                   (L-K)\Big(\sum_{m\in\MM}\sqrt{ a_m\etamkI \gamuemk}  \Big)^2
                 }
                 { 
                  \sum_{k'\in\mathcal{K}} \sum_{m\in\MM} a_m \etamkpI (\betamkue  -  \gamuemk) 
                 + 
                  \sum_{j\in\mathcal{J}} \sum_{m\in\MM} {(1 - a_m)\etamjE(\betamkue - \gamuemk)}
                   +  1/\rho_{d}}.
\end{align}
	\hrulefill
	\vspace{-4mm}
\end{figure*}
%---------------------
\end{proposition}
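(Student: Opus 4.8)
The plan is to insert the MMSE decomposition $\gmkiu=\hgmkue+\gtilmkue$ into each of the four expectations in~\eqref{eq:SINE:general} and evaluate them term by term, relying on two structural facts about the precoders in~\eqref{eq:wipzf}--\eqref{eq:wemrt} and one independence fact. The structural facts are: (i) since $(\Ghmu)^H\Ghmu\big((\Ghmu)^H\Ghmu\big)^{-1}=\qI_K$, reading off the $k$-th row gives $(\hgmkue)^H\wimkp^{\PZF}=\alpha_{mk'}^{\PZF}\delta_{kk'}$ \emph{deterministically}; and (ii) since $\qB_m\Ghmu=\boldsymbol{0}$ and $\qB_m$ is Hermitian, $(\hgmkue)^H\qB_m=\boldsymbol{0}$, hence $(\hgmkue)^H\wemj^{\PMRT}=0$. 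The independence fact is that, because the IUs and EUs transmit mutually orthogonal pilots, $\gtilmkue$ is independent of $\hgmkue$ (MMSE orthogonality) and of the estimates of every other user at AP $m$, and is therefore independent of every precoder $\wimkp^{\PZF}$, $\wemj^{\PMRT}$ built at AP $m$.

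First I would treat the numerator. By fact (i) with $k'=k$, $(\hgmkue)^H\wimk^{\PZF}=\alpha_{mk}^{\PZF}$ is a constant and $\Ex\{(\gtilmkue)^H\wimk^{\PZF}\}=0$, so $\Ex\{(\gmkiu)^H\wimk^{\PZF}\}=\alpha_{mk}^{\PZF}=\sqrt{(L-K)\gamuemk}$, the last value being the one quoted from~\cite{cite:pzf_pmrt_2020} (equivalently, $\Ex\{\|\wimk^{\PZF}\|^2\}=1$ fixes $(\alpha_{mk}^{\PZF})^2$ through the inverse-Wishart moment). Hence $|\mathrm{DS}_k|^2=\rho_d(L-K)\big(\sum_{m\in\MM}\sqrt{a_m\etamkI\gamuemk}\big)^2$, which is the numerator of~\eqref{eq:SINE:PPZF}.

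Next I would handle the three interference-plus-uncertainty expectations. Because the channels, and hence the locally built precoders, at distinct APs are independent and the per-AP summands are zero mean (by facts (i)--(ii) the mean of the effective channel to an "unintended" beam is exactly $0$, and in $\mathrm{BU}_k$ this mean has been subtracted), all cross-AP terms vanish and each expectation reduces to $\sum_{m}(\cdot)\,\Ex\{|(\gmkiu)^H\qw_m|^2\}$ with $\qw_m$ the relevant precoder. For $\mathrm{BU}_k$ the deterministic part cancels the subtracted mean; for $\mathrm{IUI}_{kk'}$ ($k'\neq k$) fact (i) makes the estimate part vanish; for $\mathrm{EUI}_{kj}$ fact (ii) makes it vanish; so in all three cases only $\Ex\{|(\gtilmkue)^H\qw_m|^2\}$ survives, and by the independence fact this equals $(\betamkue-\gamuemk)\Ex\{\|\qw_m\|^2\}=\betamkue-\gamuemk$ (using $\Ex\{\|\wimkp^{\PZF}\|^2\}=\Ex\{\|\wemj^{\PMRT}\|^2\}=1$). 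Summing, $\Ex\{|\mathrm{BU}_k|^2\}+\sum_{k'\in\KK\setminus k}\Ex\{|\mathrm{IUI}_{kk'}|^2\}=\rho_d\sum_{k'\in\KK}\sum_{m\in\MM}a_m\etamkpI(\betamkue-\gamuemk)$ and $\sum_{j\in\JJ}\Ex\{|\mathrm{EUI}_{kj}|^2\}=\rho_d\sum_{j\in\JJ}\sum_{m\in\MM}(1-a_m)\etamjE(\betamkue-\gamuemk)$.

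Finally I would substitute these four quantities into~\eqref{eq:SINE:general} and divide numerator and denominator by $\rho_d$, so the additive noise $1$ becomes $1/\rho_d$, which yields~\eqref{eq:SINE:PPZF} and, through~\eqref{eq:SEk:Ex}, the claimed SE. I expect the only delicate point to be the independence bookkeeping — establishing that $\gtilmkue$ is independent of \emph{all} precoders at AP $m$, which is where the MMSE orthogonality principle and the orthogonality of the IU/EU pilot sets are both needed — together with getting the ZF identities (i)--(ii) right; once those are in place the rest is a routine variance computation, and the absorption of the $\mathrm{BU}_k$ term into the $\sum_{k'\in\KK}$ sum is what produces the compact denominator of~\eqref{eq:SINE:PPZF}.
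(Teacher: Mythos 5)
Your argument is correct, and since the paper omits its own proof of this proposition ("due to space limitations") there is nothing to diverge from: what you give is the standard hardening-bound computation, resting on exactly the two structural identities that make PPZF work --- $(\hgmkue)^H\wimkp^{\PZF}=\alpha_{mk'}^{\PZF}\delta_{kk'}$ and $(\hgmkue)^H\qB_m=\boldsymbol{0}$ --- plus MMSE independence of $\gtilmkue$ from all locally built precoders, and it is fully consistent with the techniques the authors do exhibit in Appendix B for the HE bound. The bookkeeping (vanishing cross-AP terms by zero mean and independence, absorption of $\mathrm{BU}_k$ as the $k'=k$ term of the $\sum_{k'\in\mathcal{K}}$ sum, and division by $\rho_d$ to turn the unit noise power into $1/\rho_d$) all checks out against~\eqref{eq:SINE:PPZF}.
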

\begin{proof}
   %See Appendix \eqref{appendixHE}.
   The proof is omitted due to space limitations.
\end{proof}

\begin{proposition}~\label{Theorem:RF:PPZF}
The average HE for the $j$-th EU, achieved by PMRT precoding at the E-APs and PZF at the I-APs is bounded by
 %-------------------------
 \vspace{-0.5em}
  \begin{align}~\label{eq:NLEH:av_bound}
  \Ex\big\{\Phi_j\big(\qa,  \ETAE,\ETAI\big)\big\} \geq \frac{\Lambda_{j}\left(Q_{j}\big(\qa,  \ETAI, \ETAE\big)\right) - \phi \nu }{1-\nu},
 \end{align}
 %---------------------------
where $Q_{j}\big(\qa,  \ETAI, \ETAE\big)$ is given by ~\eqref{eq:El_average:PPZF} at the top of next page.
\vspace{-1em}
%---------------------------------------------------
\begin{figure*}
\begin{align}~\label{eq:El_average:PPZF}
   Q_{j}\big(\qa,  \ETAI, \ETAE\big) &=
     (\tau_c - \tau)\Snn \rho_{d}
     \Big(
     \big(L-K+1\big) \sum\nolimits_{m \in \MM}
     (1-a_m) \etamjE \gameumj
        +
   \sum\nolimits_{j'\in\mathcal{J} \setminus j} \sum\nolimits_{m \in \MM}
   (1 - a_m)\etamjpE \delta_{mj}
   \nonumber\\
    &\hspace{8em}+
   \sum\nolimits_{k\in\mathcal{K}} \sum\nolimits_{m \in \MM}
   a_m \etamkI \delta_{mj} + 1 / \rho_{d} \Big).
\end{align}
	\hrulefill
	\vspace{-5mm}
\end{figure*}
%----------------------------------------------------
\end{proposition}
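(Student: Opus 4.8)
The plan is to evaluate $Q_j(\qa,\ETAI,\ETAE)=\Ex\{\mathrm{E}_j(\qa,\ETAI,\ETAE)\}$ starting from~\eqref{eq:El_average}: I substitute the PPZF precoders~\eqref{eq:wipzf}--\eqref{eq:wemrt} and evaluate, separately, the self-EU term $\Ex\{|(\gmjue)^H\wemj^{\PMRT}|^2\}$, the inter-EU terms $\Ex\{|(\gmjue)^H\wemjp^{\PMRT}|^2\}$ for $j'\in\JJ\setminus j$, and the IU-leakage terms $\Ex\{|(\gmjue)^H\wimk^{\PZF}|^2\}$. Once each is in closed form, collecting the prefactor $(\tau_c-\tau)\Snn$ and factoring out $\rho_{d}$ (so the additive noise in~\eqref{eq:El_average} becomes the $1/\rho_{d}$ inside the bracket) reproduces $Q_j$ in~\eqref{eq:El_average:PPZF}; the bound~\eqref{eq:NLEH:av_bound} then follows verbatim from the already-established Jensen step~\eqref{eq:Jensen}. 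Two structural facts carry the bookkeeping: (i) the BD-RIS-related channels of the EUs are independent of the IU channels (orthogonal training pilots, independent fading), so the projector $\qB_m$ built from $\Ghmu$ is independent of $\gmjue$, $\hgmjue$, and the estimation error $\gtilmjeu$; and (ii) averaging over the AP--RIS link $\Hmlue$ (with i.i.d.\ columns) symmetrizes the RIS-to-EU correlation, yielding the isotropic second moment $\Ex\{\gmjue(\gmjue)^H\}=(\betamjeu+\trace(\bar{\THETA}_{mj}))\qI_L=\delta_{mj}\qI_L$ even though $\gmjue$ is not Gaussian (cf.\ the derivation of Lemma~\ref{Lemma:2ndand4thmoment}).

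The two cross terms are short. Since $\wimk^{\PZF}$ depends only on $\Ghmu$, it is independent of $\gmjue$; conditioning on $\wimk^{\PZF}$ and using (ii) with the design normalization $\Ex\{\|\wimk^{\PZF}\|^2\}=1$ gives $\Ex\{|(\gmjue)^H\wimk^{\PZF}|^2\}=\delta_{mj}\Ex\{\|\wimk^{\PZF}\|^2\}=\delta_{mj}$, which supplies the term $\sum_{k\in\KK}\sum_{m\in\MM}a_m\etamkI\delta_{mj}$. For $j'\neq j$ the aggregated channels of distinct EUs can be treated as effectively independent — their cross-covariance $\Ex\{\gmjue(\gmjpue)^H\}$ vanishes, the shared AP--RIS link cancelling out — so the same conditioning argument with $\Ex\{\|\wemjp^{\PMRT}\|^2\}=1$ yields $\Ex\{|(\gmjue)^H\wemjp^{\PMRT}|^2\}=\delta_{mj}$ and the term $\sum_{j'\in\JJ\setminus j}\sum_{m\in\MM}(1-a_m)\etamjpE\delta_{mj}$.

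The main effort — and the main obstacle — is the self-EU term $\Ex\{|(\gmjue)^H\wemj^{\PMRT}|^2\}$, with $\wemj^{\PMRT}=\alpha_{mj}^{\PMRT}\qB_m\hgmjue$ and $(\alpha_{mj}^{\PMRT})^2=((L-K)\gameumj)^{-1}$. I would split $\gmjue=\hgmjue+\gtilmjeu$; because $\gtilmjeu$ is zero-mean and independent of $\hgmjue$ and $\qB_m$, the cross terms vanish, leaving $\Ex\{(\hgmjue^H\qB_m\hgmjue)^2\}$ and $\Ex\{|\gtilmjeu^H\qB_m\hgmjue|^2\}$. Writing the rank-$(L-K)$ projector as $\qB_m=\qU\qU^H$ with $\qU\in\C^{L\times(L-K)}$ orthonormal and, by (i), independent of the EU channel, we get $\|\qB_m\hgmjue\|^2=\|\qU^H\hgmjue\|^2$ with $\qU^H\hgmjue\sim\CN(\boldsymbol{0},\gameumj\qI_{L-K})$, so $\|\qB_m\hgmjue\|^2/\gameumj$ is a $\mathrm{Gamma}(L-K,1)$ variable; this gives $\Ex\{\|\qB_m\hgmjue\|^2\}=(L-K)\gameumj$, $\Ex\{(\hgmjue^H\qB_m\hgmjue)^2\}=\Ex\{\|\qB_m\hgmjue\|^4\}=(L-K)(L-K+1)\gameumj^2$, and $\Ex\{|\gtilmjeu^H\qB_m\hgmjue|^2\}=(\delta_{mj}-\gameumj)\Ex\{\|\qB_m\hgmjue\|^2\}$. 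Multiplying through by $(\alpha_{mj}^{\PMRT})^2$ collapses the self-EU term to the coherent beamforming gain $(L-K+1)\gameumj$ appearing in~\eqref{eq:El_average:PPZF}. Assembling the three contributions, the noise term, and the prefactors gives~\eqref{eq:El_average:PPZF}, and inserting $Q_j$ into~\eqref{eq:Jensen} proves~\eqref{eq:NLEH:av_bound}. The delicate points are (a) keeping the non-Gaussianity of $\gmjue$ from intruding where only its isotropic second moment matters — exactly where Lemma~\ref{Lemma:2ndand4thmoment} and (ii) are invoked — and (b) tracking the rank-$(L-K)$ projection $\qB_m$ correctly through the fourth-order moment via the Gamma identity.
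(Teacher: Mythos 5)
Your overall route is the same as the paper's: evaluate the three expectations in \eqref{eq:El_average} term by term, dispose of the inter-EU and IU-leakage terms via independence and the isotropic second moment $\Ex\{\gmjue(\gmjue)^H\}=\delta_{mj}\qI_L$, treat the self-EU term by the estimate-plus-error split together with a fourth-order moment of the projected estimate, and then insert $Q_j$ into the Jensen step \eqref{eq:Jensen}. Your Gamma/projector argument for $\|\qB_m\hgmjue\|^2$ (conditioning on the orthonormal basis $\qU$ of the rank-$(L-K)$ projector) is a clean, self-contained substitute for the paper's appeal to the fourth-order moment rule and to $\Ex\{\qB_m\}$, and it yields the same numbers $(L-K)\gameumj$ and $(L-K)(L-K+1)(\gameumj)^2$.

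There is, however, an internal contradiction at the decisive step. You correctly observe that the error--estimate cross terms vanish but that the error energy does not: you compute $\Ex\{|(\gtilmjeu)^H\qB_m\hgmjue|^2\}=(\delta_{mj}-\gameumj)(L-K)\gameumj$, which after multiplication by $(\alpha_{mj}^{\PMRT})^2=((L-K)\gameumj)^{-1}$ contributes $\delta_{mj}-\gameumj\neq 0$. Yet in the next sentence you assert that the self-EU term ``collapses to $(L-K+1)\gameumj$''. Carried through consistently, your own algebra gives $(L-K+1)\gameumj+(\delta_{mj}-\gameumj)$, which does \emph{not} match \eqref{eq:El_average:PPZF}; you cannot discard the error term, since $\gtilmjeu$ is independent of $\wemj^{\PMRT}$ but has nonzero covariance, so its quadratic form has nonzero mean. (The paper's own appendix obtains $(L-K+1)\gameumj$ only by writing $\Ex\{|(\hgmjue+\gtilmjeu)^H\wemj^{\PMRT}|^2\}=\Ex\{|(\hgmjue)^H\wemj^{\PMRT}|^2\}$ in its first display, i.e., by silently dropping exactly the term you evaluated --- so you have surfaced, not resolved, a discrepancy.) A second, smaller gap: for $j'\neq j$ you invoke vanishing cross-covariance, but $\gmjue$ and $\wemjp^{\PMRT}$ share the AP--RIS link $\Hmlue$, so the step actually requires $\Ex\{\gmjue(\gmjue)^H\mid \wemjp^{\PMRT}\}=\delta_{mj}\qI_L$; the conditional covariance given $\Hmlue$ is $\betamjeu\qI_L+\Hmlue^H\THETA\Rzj\THETA^H\Hmlue$, which is random, so uncorrelatedness of $\qz_j$ and $\qz_{j'}$ alone does not close the argument (the paper is equally terse on this point).
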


\begin{proof}
   See Appendix~\ref{appendixHE}.
   %The proof is omitted due to space limitations.
\end{proof}

By examining the expressions for the SE~\eqref{eq:SINE:PPZF} and HE~\eqref{eq:NLEH:av_bound}, we note that utilizing the current degrees-of-freedom for system design, encompassing the design of the scattering matrix for the BD-RIS, the AP mode selection vector, and the power control coefficients at the AP, enables us to fulfill various design objectives in CF-mMIMO SWIPT systems.

\begin{figure}[t]
	\centering
	\vspace{-0.75em}
	\includegraphics[width=80mm]{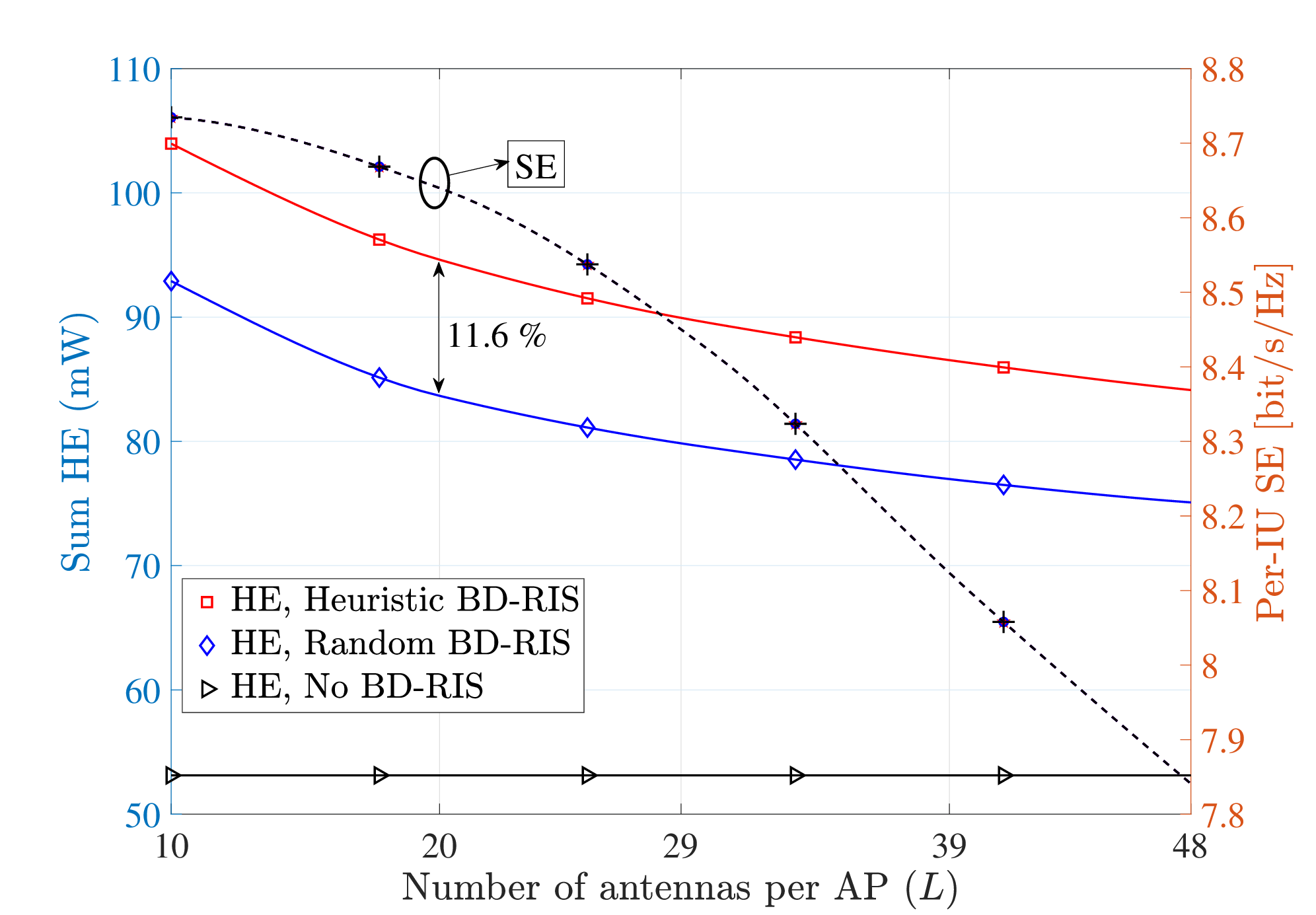}
	\vspace{-0.6em}
	\caption{Impact of the number of antennas per AP ($L$) on the average HE and SE ($N=40, ML = 480, K = 3, J = 5$).}
	\vspace{0.8em}
	\label{fig:Fig2}
\end{figure}
% %%%%%%%%%%%%%%%%%%%%%%%%%%%%%%%%%%%

%%%%%%%%%%%%%%%%%%%%%%%%%%%%%%%%%%%%%%%%%%%%%%
\vspace{-0.6em}
\section{Numerical Results and Discussion}~\label{sec:simulation}
%%%%%%%%%%%%%%%%%%%%%%%%%%%%%%%%%%%%%%%%%%%%%%
In this section, numerical results are presented to illustrate the performance of our proposed scenario. We consider a BD-RIS-assisted CF-mMIMO SWIPT  network, where the APs are uniformly distributed in an area of $1 \times 1 \text{km}^2$. The height of the APs, BD-RIS, and users is $15$ m, $30$ m, and $1.65$ m, respectively. Moreover, we set $\tau_c = 200$, $\tau = K + J$, $\Snn = -92$ dBm, $\rho_d = 1$ W and $\rho_u = 0.2$ W.  
{In addition, we set the non-linear energy harvesting parameters as $\xi = 150$, $\chi = 0.014$, and $\phi = 0.024$ W \cite{cite:MRT_for_HE:Almradi}}. 
The large-scale fading coefficients are generated following the three-slope propagation model from \cite{cite:HienNgo:cf01:2017}. The shadow fading follows a log-normal distribution with a standard deviation of $8$ dB. In our simulations, we compare three different design scenarios: 

\begin{itemize}
    \item Heuristic scattering matrix design at the BD-RIS (\textbf{Heuristic BD-RIS}), where the symmetric-unitary projection scheme in~ \cite{YMao_BDRIS_2023} is adopted. {To this end, we first construct $M$ matrices $\qF_m \triangleq \Hmlue \qH^{E}_m \qZ_{m}^{H}$, where $\qH^{E}_m =[\hmlueo,\ldots,\hmlueJ] \in \C^{M \times J}$ and $\qZ_{m} = [\qz_1,\ldots,\qz_J] \in \C^{N \times J}$. Next, the symmetric projection of $\qF_m$ is obtained by averaging $\qF_m$ and its transpose across various small-scale channel realizations. To meet the unitary constraint, the rank $r_m$ and singular value decompositions $[\qU_m, \qV_m]$ of $\qF_m$ are computed. Then, $\qU_m, \qV_m$ are partitioned into $[\qU_{m}^{r_m}, \qU_{m}^{N-r_m}]$, and $[\qV_{m}^{r_m}, \qV_{m}^{N-r_m}]$, respectively. This approach allows for the heuristic design of $\THETA$ via the symmetric-unitary projection, which can be computed as $\THETA_{m} \triangleq \hat{\qU}_m \qV^{H}$, where $\hat{\qU}_m = [\qU_{m}^{r_m}, \big( \qV_{m}^{N-r_m} \big)^{*}]$. The trace products $\trace(\THETA^{H} \qR \THETA \qR)$ and the indexes of $\THETA_{m}$ of each iteration are stored, such that%, the heuristic design is obtained as
    \begin{align*}
        \THETA_{m}^{*} =\underset{\THETA_{m}}{\argmax} {\trace(\THETA^{H} \qR \THETA \qR)}.
    \end{align*}
    This process iteratively refines the design of the scattering matrix at the BD-RIS.}
    \item BD-RIS with a random scattering matrix (\textbf{Random BD-RIS}), where a discrete Fourier transformation matrix is used to generate random $\THETA$ matrices, which are then normalized by $N$.
    \item CF-mMIMO SWIPT without BD-RIS (\textbf{No BD-RIS}), to study the role of BD-RIS in the SWIPT process.    
\end{itemize}
For both Heuristic BD-RIS and Random BD-RIS, the scattering matrix satisfies the constraints $\THETA = \THETA^{T}$, and $\THETA^{H} \THETA = \mathbf{I}_{N}$~\cite{ cite:HongyuLi:BDRISoverview01:2023, Li:JSAC:2023}. We assume that the APs' operation mode selection parameters ($a_m$, $\forall m\in \MM$) are randomly assigned, while the downlink power control coefficients are $\etamkI = \big(\sum_{k\in\KK}\gamuemk\big)^{-1}$, $\forall m, k$,  and $\etamjE = \big(\sum_{j\in\JJ}\gameumj\big)^{-1}$, $\forall m, j$.

%%%%%%%%%%%%%%%%%%%%%%%%%%%%%%%%%%%%%%
% \begin{figure*}[t]
%     \centering
%     	\subfigure[Impact of $M$]{
%     		\includegraphics[width=0.45\linewidth]{Figures/change_M10_Net1000_Opt1.eps}  		
%     	}
%     	\hfill
%     	\subfigure[Impact of $L$]{
%     		\includegraphics[width=0.45\linewidth]{}
%     	}
%     	\caption{Impact of number of APs (M) and number of antennas per AP (L) on the average harvested energy and spectrum efficiency ($N=40, ML = 480, K = 3, J = 5$).}
%     \label{fig:Fig2}
% \end{figure*}

%%%%%%%%%%%%%%%%%%%%%%%%%%%%%%%%%%%%%%

Figure~\ref{fig:Fig2} shows the average sum HE and the per-IU SE achieved by three scenarios versus the number of antennas per AP. We note that, for a fixed number of service antennas (i.e., $ML = 480$), the number of APs decreases as the number of antennas per AP $L$ increases. From the standpoint of average sum heuristic HE, it is evident that Heuristic BD-RIS outperforms Random BD-RIS, achieving a noticeable gain of up to $12\%$. This result showcases the significance of an optimization scheme for BD-RIS scattering matrix design. On the other hand, it is evident that without BD-RIS, WPT fails to meet the energy requirements of the EUs. Finally, we observe that, by increasing $L$ and consequently decreasing $M$, both the per-IU SE and average sum HE are degraded. The decrease rate of the SE is more dominant compared to the decrease rate of the average sum HE, which can be interpreted as a consequence of employing BD-RISs to assist the EUs.

Figure~\ref{fig:Fig3} shows the average HE of the EUs and per-IU SE versus the number of APs for fixed $ML = 480$. As the number of APs increases from $M=8$ to $M=24$, there is a notable enhancement in the average HE, with improvements of $14\%$ and $20\%$ for Random BD-RIS and Heuristic BD-RIS, respectively. Beyond this, the improvement continues, albeit at a more gradual rate. These results confirm that both the number of APs and the per-AP antenna count should be carefully selected to improve both SE and average HE in the considered CF-mMIMO SWIPT system. 

% As $M$ decreases, distance-dependent pathlosses increased, leading to lower SE performance. On the other hand, excessive number of $M$ could leads to increasing interference, which degrades the signal quality and SE metric. 

% %%%%%%%%%%%%%%%%%%%%%%%%%%%%%%%%%%%%%%
\begin{figure}[t]
	\centering
	\vspace{-.75em}
	\includegraphics[width=80mm]{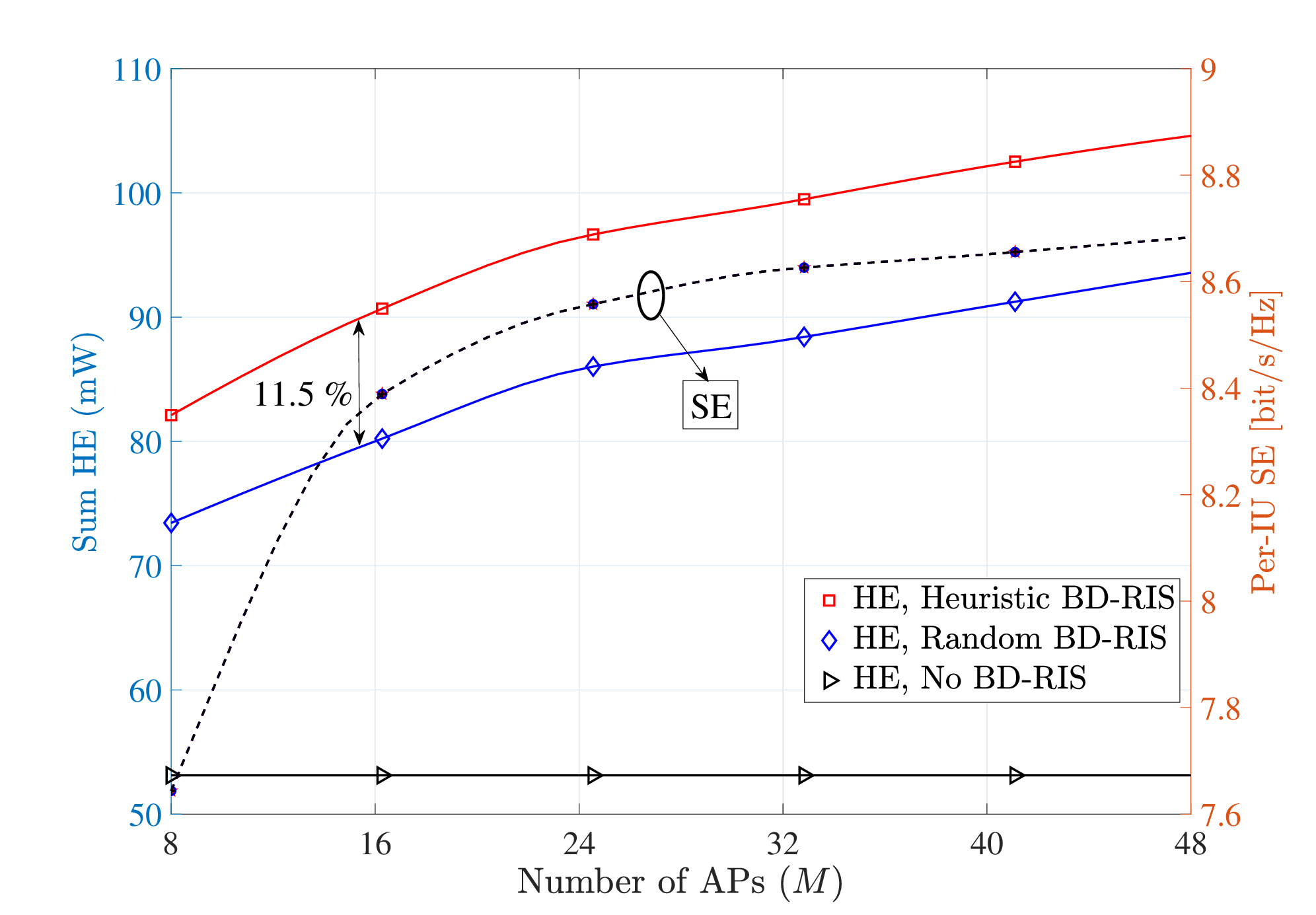}
	\vspace{-0.6em}
	\caption{Impact of the number of APs ($M$) on the average HE and SE ($N=40, ML = 480, K = 3, J = 5$).}
	\vspace{0.8em}
	\label{fig:Fig3}
\end{figure}
% %%%%%%%%%%%%%%%%%%%%%%%%%%%%%%%%%%%

%%%%%%%%%%%%%%%%%%%%%%%%%%%%%%%%%%%%%%%%%%%%%%%%%%%%%%%
\vspace{-0.6em}
\section{Conclusion}~\label{sec:conclusion}
%%%%%%%%%%%%%%%%%%%%%%%%%%%%%%%%%%%%%%%%%%%%%%%%%%%%%%%
We evaluated the SE and average HE performance in a novel setup of a CF-mMIMO SWIPT network assisted by a BD-RIS. In this setup, the APs function as either I-APs or E-APs, serving two distinct groups of IUs and EUs simultaneously, with the BD-RIS strategically positioned near the energy zone to facilitate the WPT towards the EUs. Our simulation results demonstrated the superior performance of the BD-RIS assisted architectures compared to the conventional designs without BD-RIS. We further observed that the proposed Heuristic BD-RIS can achieve up $12\%$ gain over the Random BD-RIS design for a fixed number of service antennas. 
For future work, integrating a robust optimization scheme to jointly optimize the design variables could pave the way for next-generation communication systems.

% For future work, we will consider spatially-correlated Rayleigh and Ricean fading models for the RIS-assisted channel. An optimization scheme will be proposed in order to jointly optimize the phase shift matrix, the AP operation selection, and the transmit power allocation. 
% \vspace{-0.9em}

% %%%%%%%%%%%%%%%%%%%%%%%%%%%%%%%%%%%%%%%%%%%%%%%%%%%%%%%
% \vspace{-0.6em}
\appendices
% %%%%%%%%%%%%%%%%%%%%%%%%%%%%%%%%%%%%%%%%%%%%%%%%%%%%%%%
% \section{Useful Lemma}
% %%%%%%%%%%%%%%%%%%%%%%

%%%%%%%%%%%%%%%%%%%%%%%%%%%%
\vspace{-0.8em}
\section{Proof of Lemma~\ref{Lemma:2ndand4thmoment}}\label{appendix:A}
%%%%%%%%%%%%%%%%%%%%%%%%%%
The following lemma is useful to prove Lemma~\ref{Lemma:2ndand4thmoment}. The proof is omitted due to space limitations. 
\begin{Lemma}~\label{lemma:EXMXXNX}
    For an $N \times L$ matrix $\qX = [\qx_1,...,\qx_L]$ with $\qx_l \in \C^{N\times1}$ distributed as $\qx_l \sim \CN(\boldsymbol{0},\bar{\qR})$ with $\qR \in \C^{N\times N}$, and two deterministic matrices $\qM$, $\qN \in \C^{N \times N}$, it holds that $\Ex\bigl\{ \qX^{H} \qM \qX \qX^{H} \qN \qX \bigl\}$ is a diagonal matrix of size $L\times L$ with $L \trace\big( \qM  \qR  \qN \qR\big)
         + \trace\big( \qR\qM  \big) \trace(  \qR\qN \big)$ on its main diagonal. 
%     %-------------------------
% \begin{align}~\label{eq:lemma1}
%     &\Ex\bigl\{ \qX^{H} \qM \qX \qX^{H} \qN \qX \bigl\} \nonumber \\
%     &= \diag\Big( L \trace\big( \qM  \qR  \qN \qR\big)
%          + \trace\big( \qR\qM  \big) \trace(  \qR\qN \big)\Big)
% \end{align}
% %-----------------------
\end{Lemma}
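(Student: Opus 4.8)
The plan is to compute the $(p,q)$ entry of the matrix product directly and to exploit the column-wise independence of $\qX$. Writing $\qA \triangleq \qX^{H}\qM\qX$ and $\qB \triangleq \qX^{H}\qN\qX$, the $(p,q)$ entry of $\qA\qB$ is $\sum_{r=1}^{L}\qx_p^{H}\qM\qx_r\qx_r^{H}\qN\qx_q$, so that $\Ex\{[\qX^{H}\qM\qX\qX^{H}\qN\qX]_{pq}\} = \sum_{r=1}^{L}\Ex\{\qx_p^{H}\qM\qx_r\qx_r^{H}\qN\qx_q\}$. Since the columns $\qx_1,\ldots,\qx_L$ are i.i.d. $\CN(\mathbf{0},\qR)$ and zero-mean, I would split each summand according to whether $r\in\{p,q\}$ and use independence to discard every term in which a single column appears to odd order.

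First I would treat the off-diagonal case $p\neq q$ to establish diagonality. If $r\notin\{p,q\}$, then $\qx_r$ is independent of $\qx_p,\qx_q$ and $\Ex\{\qx_r\qx_r^{H}\}=\qR$, leaving $\Ex\{\qx_p^{H}\qM\qR\qN\qx_q\}=0$ because $\qx_p,\qx_q$ are independent and zero-mean. If $r=p$ (so $r\neq q$) the factor $\qx_q$ appears alone and independent, while if $r=q$ the factor $\qx_p^{H}$ appears alone and independent; in both situations the zero-mean property annihilates the term. Hence every summand vanishes for $p\neq q$, so the expected matrix is diagonal.

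Next I would evaluate the diagonal entries $p=q$. The $L-1$ terms with $r\neq p$ each reduce, after taking expectation over the independent $\qx_r$, to $\Ex\{\qx_p^{H}\qM\qR\qN\qx_p\}=\trace(\qM\qR\qN\qR)$. The single remaining term $r=p$ is the fourth-order form $\Ex\{(\qx_p^{H}\qM\qx_p)(\qx_p^{H}\qN\qx_p)\}$ of one Gaussian vector; expanding it entrywise and applying the complex Isserlis (Wick) theorem—each conjugate entry pairing with a non-conjugate entry through $\Ex\{x_s x_t^{*}\}=[\qR]_{st}$—produces the two pairings $\trace(\qM\qR)\trace(\qN\qR)+\trace(\qM\qR\qN\qR)$. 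Summing the $L-1$ identical terms with this one gives $L\trace(\qM\qR\qN\qR)+\trace(\qM\qR)\trace(\qN\qR)$, which matches the claimed value after the cyclic identities $\trace(\qM\qR)=\trace(\qR\qM)$ and $\trace(\qN\qR)=\trace(\qR\qN)$.

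The main obstacle is the single fourth-moment term $r=p$: all other contributions collapse immediately by independence and the zero-mean property, but this term genuinely requires the fourth-order Gaussian moment identity, whose two-pairing structure is exactly what yields the $L\trace(\qM\qR\qN\qR)$ contribution (the diagonal ``middle'' pairing, accumulated over all $r$) together with the single $\trace(\qM\qR)\trace(\qN\qR)$ (the ``trace-product'' pairing that survives only when $r=p$). Care must be taken that the complex Wick expansion pairs conjugates with non-conjugates exclusively, since an inadvertent real-Gaussian pairing would introduce spurious terms and corrupt the count.
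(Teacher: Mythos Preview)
Your argument is correct: the entrywise expansion of $[\qX^{H}\qM\qX\qX^{H}\qN\qX]_{pq}$, the case split on whether the inner index $r$ coincides with $p$ or $q$, the use of column independence and zero mean to annihilate every off-diagonal contribution, and the reduction of each diagonal entry to $(L-1)$ identical second-moment terms plus a single fourth-moment term evaluated via the complex Isserlis identity, together yield exactly the stated diagonal value $L\,\trace(\qM\qR\qN\qR)+\trace(\qR\qM)\trace(\qR\qN)$.

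The paper itself omits the proof of this lemma due to space limitations, so there is no in-paper argument to compare against; your derivation would serve as a complete proof. One minor remark: the lemma as stated does not explicitly assert mutual independence of the columns $\qx_1,\ldots,\qx_L$, but this is implicit from its use in Appendix~A (where the columns of $\qH_m$ are i.i.d.\ $\CN(\boldsymbol{0},\Rhm)$), and you correctly make that assumption explicit at the outset.
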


% \begin{proof}
%     The proof is omitted due to space limitation.
% \end{proof}

By leveraging the statistical independence of the direct and indirect links, the second-order moment of $\gmjue$ becomes
\begin{align}
    \Ex\{ \Vert\gmjue\Vert^2 \} 
    &= \Ex{\Vert\hmlue\Vert^2} + \Ex\Big\{ \Vert \Hmlue^{H} \THETA \qz_{j} \Vert^2 \Big\} \nonumber \\
    %------
    &= L \betamjeu + \Ex \Big\{\qz^{H}_{j} \THETA^{H} \Hmlue \Hmlue^{H} \THETA \qz_{j} \Big\}
    \nonumber \\
    %------
    % &= L \betamjeu + \Ex \{tr( \qz^{H}_{j} \THETA^{H} \Hmlue \Hmlue^{H} \THETA \qz_{j}) \} 
    % \nonumber \\
    %------
    % &\overset{\mathrm{(a)}}{=} L \betamjeu + \Ex \Big\{\trace( \THETA^{H} \Hmlue \Hmlue^{H} \THETA \qz_{j} \qz^{H}_{j} ) \Big\} 
    % \nonumber\\
    %------
    &\overset{\mathrm{(a)}}{=} L \betamjeu + \trace\Big( \THETA^{H} \Ex \big\{ \Hmlue \Hmlue^{H}\big\} \THETA \Ex \big\{ \qz_{j} \qz^{H}_{j}\big\} \Big) 
    \nonumber \\
    %------
    &= L \Big( \betamjeu + \trace\Big( \THETA^{H} \Rhm \THETA \Rzj \Big)\Big),
\end{align}
%------------------------------------
where (a) exploits $\Ex\{\qx^{H} \qy\} = \Ex\{\trace(\qy \qx^{H})\}$, followed by the property of expectation of independent channels $\qH_m$ and $\qz_{j}$.

%\section{Compute $\Ex\{ \Vert\gmjue\Vert^4 \}$}
To derive the fourth-order moment $\Ex\{ \Vert\gmjue\Vert^4 \}$, we define  $a \!=\!  \Vert \hmlue \Vert^2$, $
b \!=\! (\hmlue)^H \Hmlue^{H} \THETA \qz_{j}$, $
c \!=\! \qz_j^{H} \THETA^H \Hmlue \hmlue$, and $d \!=\!\Vert \Hmlue^{H} \THETA \qz_{j} \Vert^2$. Thus, we have
%--------------
\begin{align}~\label{eq:Ex_gmjue4}
    \Ex\{ \Vert\gmjue\Vert^4 \} 
    \!= &\Ex\{\vert a \vert^2\} \!+\! \Ex\{| b |^2\} \!+\! \Ex\{| c |^2\}\!+\!\Ex\{| d |^2\}\!+\!2 \Ex\{ad\}.
\end{align}
%-----------------
We notice that
%-----
\begin{equation}~\label{eq:Ex_a^2}
    \Ex\{\vert a \vert^2\} = \Ex\{\Vert \hmlue \Vert^4\} = L(L+1)(\betamjeu)^2.
\end{equation}
%----------------
Moreover, $\Ex\{\vert b \vert^2\}=\Ex\{\vert c \vert^2\}$, which can be calculated as
%----------------------
\begin{align}~\label{eq:Ex_b^2}
   \Ex\{| b |^2\} 
   % &= \betamjeu \Ex\{ \qz_j^{H} \THETA^H \Hmlue \qI_{L} \Hmlue^{H} \THETA \qz_{j}  \} \nonumber \\
   &= \betamjeu \trace( \THETA^H  \Ex \{ \Hmlue \Hmlue^{H} \} \THETA \Ex\{ \qz_{j}\qz_j^{H} \})  \nonumber \\
   &= L \betamjeu\bigl( \trace( \THETA^H \Rhm \THETA \Rzj) \bigl).
\end{align}
%-------------------
In order to derive $\Ex\big\{\vert d \vert^2 \big\}$, we first express it as
%-------------
\begin{align}~\label{eq:Ex_d^2}
   \Ex\big\{\vert d \vert^2 \big\} 
    &= \Ex\Big\{ \trace(\Rzj \THETA^{H} \Hmlue \Hmlue^{H} \THETA \Rzj \THETA^{H} \Hmlue \Hmlue^{H} \THETA) \nonumber \\  
    &\hspace{1em}+ \big\vert \trace(\Rzj \THETA^{H} \Hmlue \Hmlue^{H} \THETA) \big\vert^2   \Big\},
\end{align}
%-----------------
where \cite[Lemma 2]{cite:Michail:Lemma2:2015} is applied for a complex Gaussian random vector $\qz_{j} \sim \CN(\boldsymbol{0},\Rzj)$ and a given deterministic matrix $\THETA^{H} \Hmlue \Hmlue^{H} \THETA $. 

The first expectation term in~\eqref{eq:Ex_d^2} can be obtained as
%-------------------------
\begin{align}~\label{eq:Ex_ls}
    &\Ex \Big\{ \trace\big(\Hmlue^{H} \THETA \Rzj \THETA^{H} \Hmlue \Hmlue^{H} \THETA \Rzj \THETA^{H} \Hmlue \big)   \Big\} \nonumber \\
    &= \trace \Big(\Ex \Big\{ \Hmlue^{H} \THETA \Rzj \THETA^{H} \Hmlue \Hmlue^{H} \THETA \Rzj \THETA^{H} \Hmlue \Big\} \Big) \nonumber \\
    &\!\stackrel{(a)}{=} \!
    L\bigg(L \trace\big( \THETA \Rzj \THETA^{H}  \Rhm  \THETA \Rzj \THETA^{H} \Rhm\big)
    \nonumber\\
    &\hspace{2em}
         + \trace\big( \Rhm\THETA \Rzj \THETA^{H}  \big) 
         \trace(  \Rhm\THETA \Rzj \THETA^{H} \big)\Big),
         \nonumber\\
     &\!=\! 
    L\Big(L \trace\big(( \THETA \Rzj \THETA^{H}  \Rhm  )^{\!2}\big)
             \!+ \! \big(\trace( \Rhm\THETA \Rzj \THETA^{\!H} ) \big)^{\!2}\Big),
         %\trace(  \Rhm\THETA \Rzj \THETA^{H} \big)\Big),
\end{align}
%--------------------
where we have used  Lemma~\ref{lemma:EXMXXNX}.

Due to space limitations, we present the conclusive results for the second terms in~\eqref{eq:Ex_d^2} as
%-----------------
%\begin{subequations}~\label{eq:Ex_ls}
  \begin{align}
   %  &\Ex \Big\{ \trace\big(\Hmlue^{H} \THETA \Rzj \THETA^{H} \Hmlue \Hmlue^{H} \THETA \Rzj \THETA^{H} \Hmlue \big)   \Big\}= L\Big(L \trace\big(
   % \nonumber\\
   %   &\hspace{4em} (\THETA \Rzj \THETA^{H}  \Rhm  )^2\big)
   %           \!\!+ \!\! \big(\trace( \Rhm\THETA \Rzj \THETA^{H} ) \big)^2\Big),\\
 &\Ex\Big\{ \Big\vert \trace\big( \Hmlue^{H} \THETA \Rzj \THETA^{H} \Hmlue \big) \Big\vert^{2} \Big\} 
    = L\trace\bigl( (\THETA \Rzj \THETA^{H} \Rhm)^2 \bigl)\nonumber\\
    &\hspace{6em}+ L^2\bigl( \trace(\THETA \Rzj \THETA^{H} \Rhm) \bigl)^2.             
\end{align}  
%\end{subequations}
%----------------

Finally, using the independence property between the direct and indirect channels, $\Ex\{ ad \}$ can be derived as
%----------------
\begin{align}~\label{eq:Ex_2ad}
   &\Ex\{ 2ad \} 
   %= 2 \Ex \{  \Vert \hmlue \Vert^2 \Vert \Hmlue^H \THETA \qz_{j} \Vert^2\} \nonumber \\
   = L^2 \betamjeu \bigl( \trace( \THETA^H \Rhm \THETA \Rzj) \bigl). 
\end{align}
%----------------

By substituting~\eqref{eq:Ex_a^2}, \eqref{eq:Ex_b^2}, \eqref{eq:Ex_d^2}, \eqref{eq:Ex_2ad} into \eqref{eq:Ex_gmjue4}, followed by some algebraic transformations, Lemma~\ref{Lemma:2ndand4thmoment} is obtained.

%%%%%%%%%%%%%%%%%%%%%%%%%%%%%%%%%%%%
\vspace{-1em}
\section{Proof of Proposition~\ref{Theorem:RF:PPZF}}~\label{appendixHE}
%%%%%%%%%%%%%%%%%%%%%%%%%%%%%%%%%%%%%%%
{{First, we compute the first term in \eqref{eq:El_average} as
\begin{align}
    \Ex\big\{ \big\vert \big(\gmjue \big)^{\!H} \wemj^{\PMRT} \big\vert^{2} \big\} 
    &\!=\!
    \Ex\big\{ \big\vert \big(\hgmjue \!+\! \gtilmjeu\big)^{\!H} \wemj^{\PMRT}  \big\vert^{2} \big\} \nonumber\\
    &\!=\! \Ex\big\{\big\vert\big(\hgmjue\big)^{\!H}\wemj^{\PMRT}\big\vert^2\big\}.
\end{align}
According to \eqref{eq:wemrt}, $\Ex\big\{\big\vert\big(\hgmjue\big)^{\!H}\wemj^{\PMRT}\big\vert^2\big\} $ is computed as
%-------------------
\begin{align}
    \Ex\big\{\!\big(\hgmjue\big)^{\!H}\!\wemj^{\PMRT}\!\big\}
    % \! =\! \frac{\Ex \big\{(\gmjue)^{H}\qB_m\hgmjue \big\}}{\sqrt{(L-K)\gameumj}} 
    % &\stackrel{(a)}{=} 
    % \frac{\Ex \big\{\qB_m\hgmjue(\gmjue)^{H} \big\}}{\sqrt{(L-K)\gameumj}}
    % \nonumber\\
    &\!\!\stackrel{(a)}{=} \!\! 
    \frac{
    \Ex\big\{\! \qB_m \!\big\}
    \Ex\big\{\! \hgmjue(\gmjue)^{H} \!\big\}
    }{
    \sqrt{(L\!-K)\gameumj}
    } \!\!=\!\!\sqrt{(L\!-\!K)\gameumj},
    \nonumber
    % &
    % \!\! = \!\!
    % \frac{
    % (L-K)L\gameumj
    % }{
    % L\sqrt{(L-K)\gameumj}
    % }   
\end{align}
where (a) exploits the independence between $\gmjue$ and the idempotent Hermitian matrix $\qB_{m}$, where $\Ex\big\{\! \qB_m \!\big\}=(L-K/L)\qI_L$ \cite[Lemma 1]{Mohamed_projection_2024}. Moreover, by applying the fourth-order moment rule \cite[Appendix A.2.4]{cite:HienNgo:cf01:2017}, we derive
\begin{align}
    \Ex\big\{\!\big\vert\big(\hgmjue\big)^{\!H}\wemj^{\PMRT}\big\vert^2\!\big\} 
    &= \frac{(L-K)(L-K+1)(\gameumj)^2}{(L-K)\gameumj} \nonumber\\
    &= (L-K+1)\gameumj.
\end{align}

Following the independence property between $\gtilmjeu$ and $\wemj^{\PMRT}$, the second expectation term in \eqref{eq:El_average}, is obtained as
\begin{equation}
    \Ex\big\{ (\gtilmjeu)^{H} \wemj^{\PMRT} (\wemj^{\PMRT})^{H} \gtilmjeu \big\} = \delta_{mj}.
\end{equation}
Similarly, the third expectation term can be computed as
\begin{equation}
    \Ex\big\{ (\gmjue)^{H} \wimk^{\PZF} (\wimk^{\PZF})^{H} \gmjue \big\} = \delta_{mj},
\end{equation}
due to that $\gmjue$ and $\wimk^{\PZF}$ are independent. 
% % &\hspace{9em}
}}
% %%%%%%%%%%%%%%%%%%%%%%%%%%%%%%%%%%%%%%%%%%%%%%%%%%%%%%%%%%%%%%%%%%%%%%%%%%%%%%%%%%%%%%%%%%%%%%%%%%

\balance
% \bibliographystyle{IEEEtran}
% \bibliography{IEEEabrv,references}

\begin{thebibliography}{10}
\providecommand{\url}[1]{#1}
\csname url@samestyle\endcsname
\providecommand{\newblock}{\relax}
\providecommand{\bibinfo}[2]{#2}
\providecommand{\BIBentrySTDinterwordspacing}{\spaceskip=0pt\relax}
\providecommand{\BIBentryALTinterwordstretchfactor}{4}
\providecommand{\BIBentryALTinterwordspacing}{\spaceskip=\fontdimen2\font plus
\BIBentryALTinterwordstretchfactor\fontdimen3\font minus \fontdimen4\font\relax}
\providecommand{\BIBforeignlanguage}[2]{{%
\expandafter\ifx\csname l@#1\endcsname\relax
\typeout{** WARNING: IEEEtran.bst: No hyphenation pattern has been}%
\typeout{** loaded for the language `#1'. Using the pattern for}%
\typeout{** the default language instead.}%
\else
\language=\csname l@#1\endcsname
\fi
#2}}
\providecommand{\BIBdecl}{\relax}
\BIBdecl

\bibitem{Clerckx:JSAC:2019}
B.~Clerckx, R.~Zhang, R.~Schober, D.~W.~K. Ng, D.~I. Kim, and H.~V. Poor, ``Fundamentals of wireless information and power transfer: From {RF} energy harvester models to signal and system designs,'' \emph{{IEEE} J. Sel. Areas Commun.}, vol.~37, no.~1, pp. 4--33, Jan. 2019.

\bibitem{cite:HienNgo:cf02:2018}
H.~Q. Ngo, L.-N. Tran, T.~Q. Duong, M.~Matthaiou, and E.~G. Larsson, ``On the total energy efficiency of cell-free massive {MIMO},'' \emph{{IEEE} Trans. Green Commun. Netw.}, vol.~2, no.~1, pp. 25--39, Mar. 2018.

\bibitem{Demir}
O.~T. Demir and E.~Björnson, ``Joint power control and {LSFD} for wireless-powered cell-free massive {MIMO},'' \emph{{IEEE} Trans. Wireless Commun.}, vol.~20, no.~3, pp. 1756--1769, Mar. 2021.

\bibitem{cite:cfmmimo_swipt_Galappaththige}
D.~L. Galappaththige, R.~Shrestha, and G.~A. Aruma~Baduge, ``Exploiting cell-free massive {MIMO} for enabling simultaneous wireless information and power transfer,'' \emph{{IEEE} Trans. Green Commun. Netw.}, vol.~5, no.~3, pp. 1541--1557, Jun. 2021.

\bibitem{cite:mfmmimo_swipt_Ali}
M.~Mohammadi, L.-N. Tran, Z.~Mobini, H.~Q. Ngo, and M.~Matthaiou, ``Cell-free massive {MIMO} and {SWIPT}: Access point operation mode selection and power control,'' in \emph{Proc. {IEEE} GLOBECOM}, Dec. 2023.

\bibitem{wu2021intelligent}
Q.~Wu, S.~Zhang, B.~Zheng, C.~You, and R.~Zhang, ``Intelligent reflecting surface-aided wireless communications: A tutorial,'' \emph{{IEEE} Trans. Commun.}, vol.~69, no.~5, pp. 3313--3351, May. 2021.

\bibitem{cite:Chien:TWC:2022}
T.~V. Chien, H.~Q. Ngo, S.~Chatzinotas, M.~D. Renzo, and B.~Ottersten, ``Reconfigurable intelligent surface-assisted cell-free massive {MIMO} systems over spatially-correlated channels,'' \emph{{IEEE} Trans. Wireless Commun.}, vol.~21, no.~7, pp. 5106--5128, Jul. 2022.

\bibitem{Dai:TWC:2023}
J.~Dai, J.~Ge, K.~Zhi, C.~Pan, Z.~Zhang, J.~Wang, and X.~You, ``Two-timescale transmission design for {RIS}-aided cell-free massive {MIMO} systems,'' \emph{{IEEE} Trans. Wireless Commun.}, pp. 1--1, Mar. 2023.

\bibitem{Zhao:TCOM:2022}
Y.~Zhao, B.~Clerckx, and Z.~Feng, ``{IRS}-aided {SWIPT}: Joint waveform, active and passive beamforming design under nonlinear harvester model,'' \emph{{IEEE} Trans. Commun.}, vol.~70, no.~2, pp. 1345--1359, Feb. 2022.

\bibitem{Shi:MCOM:2022}
E.~Shi \emph{et~al.}, ``Wireless energy transfer in {RIS}-aided cell-free massive {MIMO} systems: Opportunities and challenges,'' \emph{{IEEE} Commun. Mag.}, vol.~60, no.~3, pp. 26--32, Mar. 2022.

\bibitem{cite:HongyuLi:BDRISoverview01:2023}
H.~Li, S.~Shen, M.~Nerini, and B.~Clerckx, ``Reconfigurable intelligent surfaces 2.0: Beyond diagonal phase shift matrices,'' \emph{{IEEE} Commun. Mag.}, pp. 1--7, Nov. 2023.

\bibitem{Li:JSAC:2023}
H.~Li, S.~Shen, and B.~Clerckx, ``Beyond diagonal reconfigurable intelligent surfaces: A multi-sector mode enabling highly directional full-space wireless coverage,'' \emph{{IEEE} J. Sel. Areas Commun.}, vol.~41, no.~8, pp. 2446--2460, Aug. 2023.

\bibitem{cite:HienNgo:cf01:2017}
H.~Q. Ngo, A.~Ashikhmin, H.~Yang, E.~G. Larsson, and T.~L. Marzetta, ``Cell-free massive {MIMO} versus small cells,'' \emph{{IEEE} Trans. Wireless Commun.}, vol.~16, no.~3, pp. 1834--1850, Mar. 2017.

\bibitem{cite:pzf_pmrt_2020}
G.~Interdonato, M.~Karlsson, E.~Björnson, and E.~G. Larsson, ``Local partial zero-forcing precoding for cell-free massive {MIMO},'' \emph{IEEE Tran. Wireless Commun.s}, vol.~19, no.~7, pp. 4758--4774, Jul. 2020.

\bibitem{cite:MRT_for_HE:Almradi}
A.~Almradi and K.~A. Hamdi, ``The performance of wireless powered {MIMO} relaying with energy beamforming,'' \emph{{IEEE} Trans. Commun.}, vol.~64, no.~11, pp. 4550--4562, Nov. 2016.

\bibitem{Boshkovska:CLET:2015}
E.~Boshkovska, D.~W.~K. Ng, N.~Zlatanov, and R.~Schober, ``Practical non-linear energy harvesting model and resource allocation for {SWIPT} systems,'' \emph{{IEEE} Commun. Lett.}, vol.~19, no.~12, pp. 2082--2085, Dec. 2015.

\bibitem{YMao_BDRIS_2023}
T.~Fang and Y.~Mao, ``A low-complexity beamforming design for beyond-diagonal {RIS} aided multi-user networks,'' \emph{IEEE Commun. Lett.}, vol.~28, no.~1, pp. 203--207, Jan. 2024.

\bibitem{cite:Michail:Lemma2:2015}
E.~Björnson, M.~Matthaiou, and M.~Debbah, ``Massive {MIMO} with non-ideal arbitrary arrays: Hardware scaling laws and circuit-aware design,'' \emph{{IEEE} Trans. Wireless Commun.}, vol.~14, no.~8, pp. 4353--4368, Aug. 2015.

\bibitem{Mohamed_projection_2024}
M.~Elfiatoure, M.~Mohammadi, H.~Q. Ngo, P.~J. Smith, and M.~Matthaiou, ``Protecting massive {MIMO}-radar coexistence: Precoding design and power control,'' \emph{{IEEE} Open J. Commun. Society}, vol.~5, pp. 276--293, Jan. 2024.

\end{thebibliography}
% Generated by IEEEtran.bst, version: 1.14 (2015/08/26)

\end{document}